\newcommand{\be}{\begin{equation}}
\newcommand{\ee}{\end{equation}}
\newcommand{\ba}{\begin{array}}
\newcommand{\ea}{\end{array}}
\newcommand{\bea}{\begin{eqnarray}}
\newcommand{\eea}{\end{eqnarray}}
\newcommand{\ent}{\mathrm{Ent}}
\def\>{\rangle} 
\def\<{\langle}
\def\wt{{\rm wt}}
\newenvironment{acknowledgement}%
\newtheorem{lem}{Lemma}
\newtheorem*{lem*}{Lemma}
\newtheorem{alg}{Algorithm}
\newtheorem{rem}{Remark}
\newtheorem*{rem*}{Remark}
\newtheorem{cor}{Corollary}
\newtheorem*{cor*}{Corollary}
\newtheorem*{fact*}{Fact}
\newtheorem{thm}{Theorem}
\newtheorem*{thm*}{Theorem}
\newcolumntype{C}[1]{%
 >{\vbox to 4ex\bgroup\vfill\centering}%
 p{#1}%
 <{\egroup}}
\begin{document}

\title{Constructing
quantum codes from any classical code and their embedding in ground space of local Hamiltonians}

\author{Ramis Movassagh}
\affiliation{IBM Research, MIT-IBM Watson AI lab, Cambridge MA, 02142, USA}\affiliation{Google Quantum AI, Los Angeles, CA, 90291, USA}
\email{movassagh@google.com}

\author{Yingkai Ouyang}  
\affiliation{School of Mathematical and Physical Sciences, University of Sheffield, Sheffield, UK 
}
\email{y.ouyang@sheffield.ac.uk}

\begin{abstract}
Implementing robust quantum error correction (QEC) is imperative for harnessing the promise of quantum technologies. 
We introduce a framework that takes {\it any} classical code and explicitly constructs the corresponding QEC code. 
Our framework can be seen to generalize the CSS codes, 
and goes beyond the stabilizer formalism~(Fig.~\ref{Fig:VennCSS}). 
A concrete advantage is that the desirable properties of a classical code are automatically incorporated in the design of the resulting quantum code. 
We reify the theory by various illustrations some of which outperform the best previous constructions.
We then introduce a local quantum spin-chain Hamiltonian whose ground space we analytically completely characterize.
We utilize our framework to demonstrate that the ground space contains explicit quantum codes with linear distance. This side-steps the Bravyi-Terhal no-go theorem.
\end{abstract} 
\maketitle

\section*{Overview}
In a way {\it information} is an ultimate basic entity in Nature. It is the information content that gives rise to the vast complexity of the physical world even though all known matter is built from a few subatomic particles. Information processing is a way by which we can view the physical processes, information preservation is how physical properties remain invariant in time, and error correction is how a physical state is restored when it  deviates from its original form. One can argue that better understanding of the information preservation and processing is the next frontier of science that bridges the micro and macro. 
\begin{figure*}
\centering
\includegraphics[scale=0.37
]{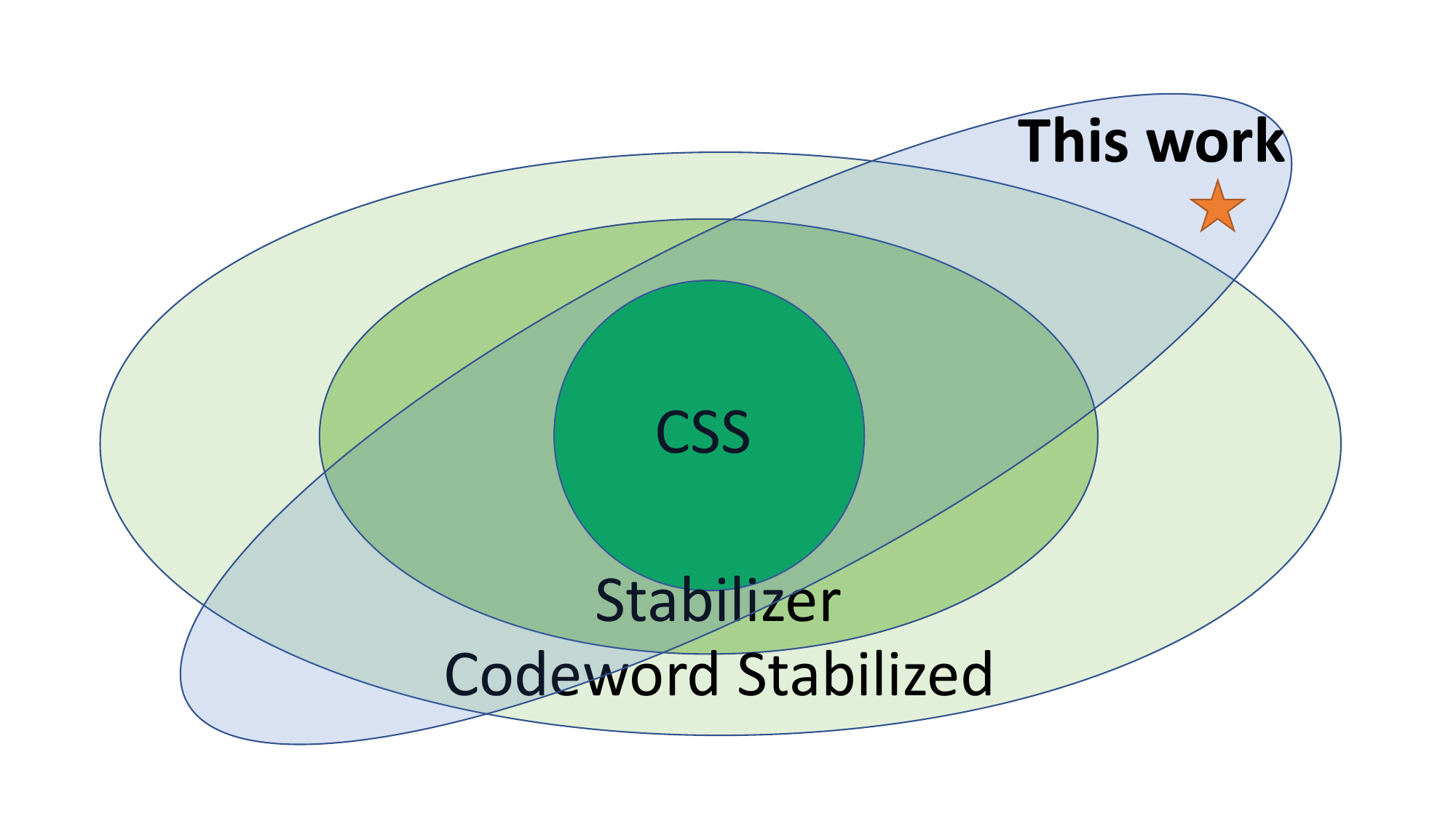} 
\caption{Comparison of the codes attainable within this work with  Calderbank-Shor-Steane (CSS)~\cite{calderbank1996good,steane1996multiple}, quantum stabilizer~\cite{gottesman96,CRSS1997quantum}, and codeword-stabilized (CWS)~\cite{CSSZ08} codes. Inclusion of stabilizers and CWS is not strict as our codewords are supported on disjoint sets. The star on the upper right indicates that we can identify codes outside of the previous frameworks.}
 \label{Fig:VennCSS}
\end{figure*}
Quantum theory is currently the best we have that accounts for the majority of physical phenomena. Therefore quantum information science beckons us towards a deeper understanding of the world. 

Computation is our way of synthesizing information processing for useful tasks. Therefore, quantum computation promises to be the most powerful model of computation available to us based on our current theories of nature. 

As stated above, preserving information and error correction are necessary parts of any reliable computation. On one hand, one can protect against errors by designing error correcting codes that allow for reliable recovery of the encoded information. On the other hand, nature herself provides innate resources for the protection and correction of information. 

Classical error correction, which applies to the standard (classical) model of computation is a mature and evolved field.
However, the successes there do not easily carry over to the quantum realm. The main difficulty is that possible quantum errors that can corrupt the computation or a state in the memory are richer than classical errors-- besides bit-flip errors that classical information suffers, there can be errors in the phases. 
Moreover, nature herself provides innate resources for the protection and correction of information.  
These beg the questions:\\
\noindent{\bf 1.} 
To what extent can one import discoveries in classical coding
 to the quantum realm? \\
{\bf 2.} Since the bulk of matter often resides in its ground state, are there simple physical quantum systems amenable to experimental realizations in the lab whose ground states sustain good quantum codes?\\

In this paper we address both of these questions in two parts, which we now overview: 

{\it Part 1} introduces a framework that takes any classical code and algorithmically constructs an explicit quantum code. 
The challenge in designing quantum codes is to not only correct the bit-flip errors, but also to correct phase-flip errors. 
We construct quantum codes with asymmetric distances given by $d_X$ and $d_Z$ for the bit-flip and phase-flip errors respectively.
Similar to CSS codes, the logical codewords are supported on classical codewords of length $n$. However, our formalism goes beyond the CSS and its generalization (the stabilizer formalism). We are able to identify codes outside the known paradigms. 
In the simplest case, we encode a logical qubit as follows
\begin{align}
    |0_L\> &= \sum_{{\bf c} \in C_0}  \alpha^{(0)}_{\bf c}|{\bf c}\> , \quad 
    |1_L\> = \sum_{{\bf c} \in C_1}  \alpha^{(1)}_{\bf c}|{\bf c}\> , \label{form1}
\end{align}
where $C_0$ and $C_1$ are disjoint subsets of $C$, and each $\boldsymbol{c}$ is a product state.
\begin{figure}
\centering
\includegraphics[scale=0.3
]{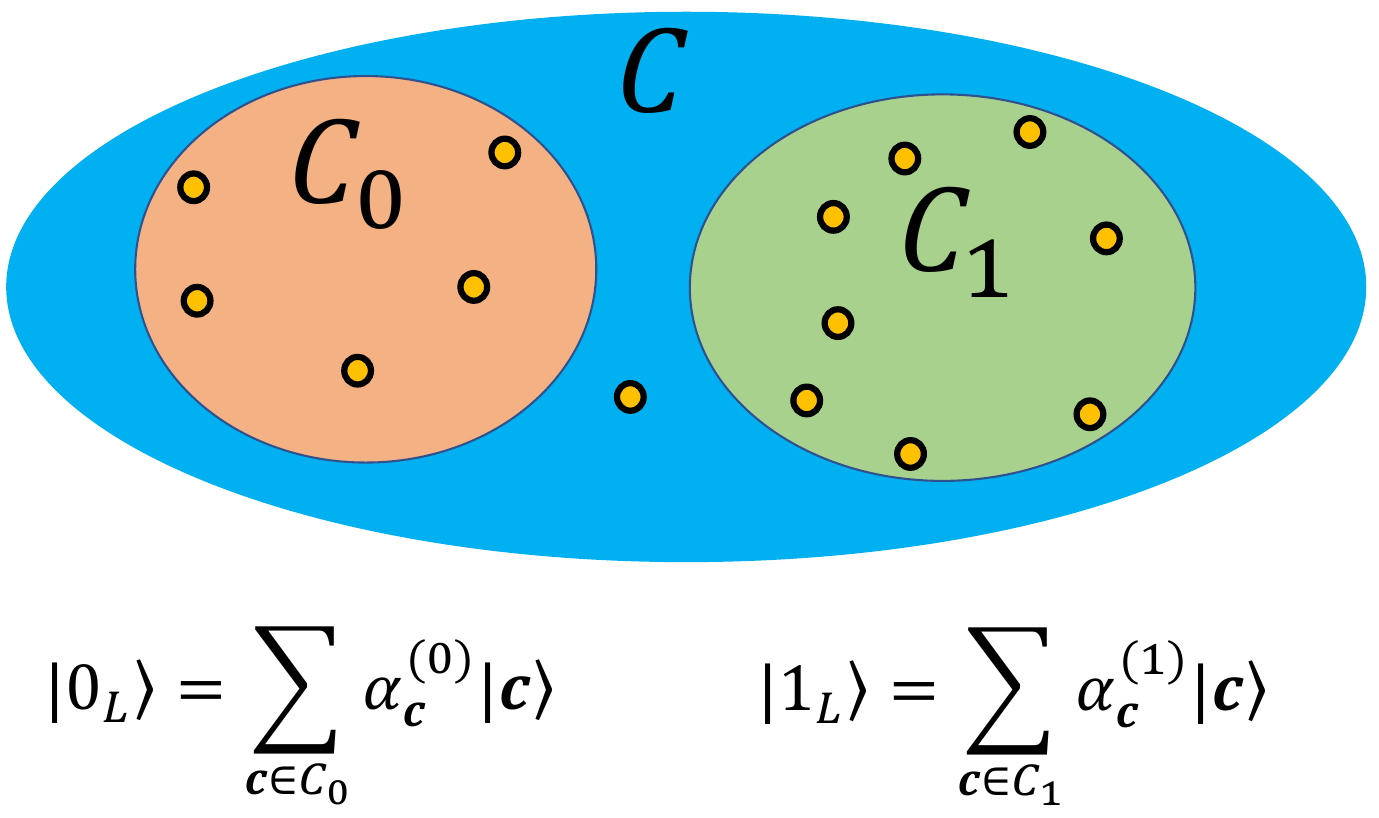} 
\caption{The encoding of one logical qubit from two disjoint subsets of a classical code $C$.}
 \label{Fig:LogicalQubit}
\end{figure}
{\it A priori}, the subsets $C_0$ and $C_1$ and the coefficients are not known. 
By mapping the Knill-Laflamme (KL) quantum error correction criteria~\cite{KnL97} to a linear algebra problem,
we construct an explicit algorithm that can determine these unknowns.

Using our framework we find constant excitation codes that outperform the best-known comparable codes \cite{ouyang2021avoiding,hu2021mitigating}. 

To design a quantum code that encodes more than one logical qubit, we provide a recursive algorithm. This algorithm finds the disjoint subsets of $C$ and enforces the KL criteria by solving for feasible solutions of a linear program. 
The algorithm outputs the logical states:
\begin{align}
    |j_L\> &= \sum_{{\bf c} \in C_j}  \alpha^{(j)}_{\bf c}|{\bf c}\> , \quad 
    j\in\{0,1,\dots,M-1\} , \label{formM}
\end{align}
where $M\le q^{c\,n}$ with $0<c<1$ resulting in quantum codes with constant rates, and $q$ is the dimension of each qudit.

Let $V_q(r)$ be the volume of the Hamming ball of radius $r$ defined by
\begin{align}
V_q(r) =\sum_{w=0}^r \binom n w (q-1)^{w}.
\end{align}
We prove (see Theorem~\ref{Thm:main1}):
\begin{thm*}
Take a classical code $C$ of length $n$ on a $q$-ary alphabet with 
the distance of $d_X$. 
If  $|C|\ge 2V_{q}(d_Z-1)$, then Alg.~\ref{alg:quBit} and Alg.~\ref{alg:quDit} in the paper explicitly derive $M$ logical states as in  Eq.~\eqref{formM} with bit- and phase-flip  distances of $d_X$ and $d_Z$ respectively. Further, 
if $C$ is chosen as a random code, with probability that approaches 1 as $n$ goes to infinity, we have
$2 \le M\le q^{c\:n}$ and $0<c<1$. The overall distance is $\min(d_X,d_Z)$.
\end{thm*}
Our algorithm
has a recursive structure. When the inequality in the theorem is satisfied, the algorithm always constructs two logical codewords. The third and subsequent logical codewords are found recursively by determining the feasibility of a sequence of linear programs. 

This recursive algorithm succeeds with high probability and almost surely over random codes. In the rare case that the set of linear constraints have rows that are all non-zero and are of the same sign, the recursion becomes infeasible and algorithm halts. 
We find that even if this happens we can always construct {\it Approximate} Quantum Error Correcting Codes (AQECC)~\cite{LNCY97} with linear distance and constant expected rate, provided that the underlying classical code has these parameters. The trade-off is between the expected number of logical states and the approximation error as given by Eq.~\eqref{eq:ExpectedLogical}.

In addition to giving asymptotic results, this formalism works equally well to construct finite size codes. To illustrate, we show examples of quantum codes that can be constructed from classical codes in the classical literature. Among the various, we find that the Steane code is the unique solution of our formalism when the input is Hamming's [7,4,3] code. The formalism introduced here is shown in relation to other known formalism in Fig.~\ref{Fig:VennCSS}.

Our framework allows one to have the following perspective on designing quantum codes: instead of choosing from existing quantum codes the best quantum codes that suits a given physical system, one can impose the constraints imposed by the physical system a priori, and then construct quantum codes that respect these constraints.
There is a big distinction between the two approaches of quantum code design. 
In the former approach, we might encounter a situation where the best available quantum code does not simultaneously have all the nice properties that we would like to have, whereas in this case, we flip the problem on its head.
Namely, once we can cast physical constraints in the language of classical coding theory, we have a systematic approach to design bespoke quantum codes that respects the required physical constraints. In the second part of our paper, we illustrate this with a concrete 2-local spin chain Hamiltonian, and design quantum codes for it. This illustrates a systematic approach  towards realising robust quantum codes in quantum matter.\\

{\it Part 2} focuses on the physics of information. Since topological models of quantum computation, it has been recognized that quantum codes may naturally appear in the ground space of physical systems~\cite{kitaev2003fault}. Local Hamiltonians are considered more physical than the general Hamiltonians. In particular the most physical are 2-local interactions, and many have investigated the encoding of quantum codes in their ground spaces.

The most celebrated example is Kitaev's toric code that resides in the ground space of a 4-local Hamiltonian, which is an effective Hamiltonian of a perturbed 2-local Hamiltonian~\cite{Kit06}.  
Kitaev's toric code is an example of topological order and paved the way for the topological model of quantum computation.  
The compass model~\cite{dorier2005quantum} is 2-local on a lattice, and has recently been proposed as a candidate to encode quantum codes in the eigenbasis of the Hamiltonian~\cite{li20192d}. However, the performance of these quantum codes are not well understood and have mainly been numerically investigated.
The advantage of our work over the aforementioned works is that we construct a 2-local Hamiltonian, whose quantum error correcting properties we analytically prove in the perfect quantum error correction setting.

Brandao {\em et al.}~\cite{brandao-PhysRevLett.123.110502} gave both constructive and non-constructive proof of the existence of AQECCs within the low-energy sector for a multitude of yo{quantum spin chains} including ferromagnetic Heisenberg model and spin-1 Motzkin spin chain~\cite{brandao-PhysRevLett.123.110502}. The challenges that remained were that the quantum error correction criteria was only approximately satisfied (hence AQECCs), errors had to be on consecutive set of spins for general spin chains, and the codes were in a low-energy sector of the local Hamiltonian (i.e., not the ground space). Moreover, the distance of the code grows logarithmically with the number of spins. 

Recently, it was also shown that by leveraging on a clock-work construction, one can design AQECCs with linear distance that reside within the ground-space of 10-local Hamiltonians \cite{bohdanowicz2019good}. However the challenge remained for constructing exact linear distance codes that reside within the ground space of a 2-local Hamiltonian. 

Our work overcomes these challenges. We construct explicit codes that exactly correct errors with linear distance that encode one logical qubit (we could have easily encoded a qudit as well). We write down a new and explicit 2-local quantum integer spin-$s$ chain parent Hamiltonian, $H_n$, on $n$ qudits. We analytically prove that its ground space can be spanned by product states. 
By mapping these product states to classical codewords, we reduce the problem of finding quantum codes in the ground space of our Hamiltonian to that of finding classical codes that must obey some constraints that are induced by the Hamiltonian. 
The classical coding problem becomes that of finding $q-$ary codes with forbidden sub-strings. 
By leveraging on existing constructions of binary codes, we construct candidate classical codes for our algorithm to run.
See the following table for  a comparison.
\begin{center}
\begin{tabular}{|c|c|c|}
\hline 
Properties & Brandao et al \cite{brandao-PhysRevLett.123.110502} & \!\!\!This work\!\!\!\tabularnewline
\hline 
\hline 
QECC & Apprx. $\epsilon=O(N^{-1/8})$ & Exact \tabularnewline
\hline 
Distance $d$ & \!\!\!$d=\Omega(\log(N))$\!\!\! & $d=\Theta(N)$\tabularnewline
\hline 
Rate & \!\!\!Vanishes\!\!\!  & Vanishes \tabularnewline
\hline 
\!\!\!Error restriction\!\!\! & consecutive spins & None\tabularnewline
\hline 
Code space & \!\!\!Low-energy eigenstates\!\!\! & \!\!\!Exact g.s.\!\!\!\tabularnewline
\hline 
\end{tabular}\medskip{}
\par\end{center}

We now give details of the main features of the Hamiltonian and its ground space.
Let us consider a spin chain of length
$n$ with open boundary conditions and the local Hilbert space dimension of $2s+1$, where $s\ge1$
is a positive integer. We take a representation in which $|j\rangle$
denotes the $s_{z}=j$ state of a spin-$s$ particle, such that $\hat{S}_{z}|j\rangle=j\;|j\rangle$ where $j\in\{0,\pm 1,\pm 2,\cdots,\pm s\}$.

The local Hamiltonian whose ground space contains the quantum code is $H_{n}=H_{n}^{J}+H_{n}^{s}$, where $H_{n}^{J}=J\sum_{k=1}^{n}\left(|0\rangle\langle0|\right)_{k}$.
The Hamiltonian $H_{n}^{s}$, is defined by
\begin{equation}
H_{n}^{s}=\sum_{k=1}^{n-1}\left\{ \sum_{m=-s}^{s}P_{k,k+1}^{m}+\sum_{m=1}^{s}Q_{k,k+1}^{m}\right\} \;,
\end{equation}
and the local terms are projectors acting on two neighboring spins $k,k+1$ are 
\begin{equation}
P^{m}=|0\leftrightarrow m\rangle\langle0\leftrightarrow m|,\\ Q^{m}=|00\leftrightarrow\pm m\rangle\langle00\leftrightarrow\pm m|,
\end{equation}
where $|0\leftrightarrow m\rangle \equiv \frac{1}{\sqrt{2}}\left[|0,m\rangle-|m,0\rangle\right]$,  $|00\leftrightarrow\pm m\rangle \equiv \frac{1}{\sqrt{2}}\left[|0,0\rangle-|m,-m\rangle\right]$,
and we denoted by $|j,\ell\rangle$ the spin state $|s_{k}^{z}=j,s_{k+1}^{z}=\ell\rangle$. We will be mostly interested in $s>1$.

Since $H_{n}^{s}$ is free of the sign problem (i.e.,~stoquastic),
the local projectors define an effective Markov chain, which  have
the following correspondence:

\begin{table}
\begin{center}
\begin{tabular}{|c|c|c|}
\hline 
Local Projector & Local Moves & Interpretation\tabularnewline
\hline 
\hline 
$P^{m}$ & $0m\longleftrightarrow m0$ & spin transport\tabularnewline
\hline 
$Q^{m}$ & $00\longleftrightarrow m,-m$ & spin interaction\tabularnewline
\hline 
\end{tabular}
\par\end{center}
\caption{Map between interactions and corresponding moves in the underlying Markov chain. Here {\it spin transport} means locally exchanging spin $m$ with $0$, and {\it spin interaction} means local creating or annihilating $m,-m$. }
\label{table:moves}
\end{table}
 
 We prove that the ground state degeneracy is exponentially large in the number of spins ($n\gg 1$):
\[
\dim(\ker(H_{n}^{s}))=\frac{\left(-2+r_{+}^{n+1}+r_{-}^{n+1}\right)}{2(s-1)} \approx \frac {r_{+}^{n+1}}{2(s-1)},
\]
where $r_\pm\equiv(1\pm\sqrt{1-1/s})$,
and $\ker(M)$ the kernel of the operator $M$.
Let us denote by $\ent_q : [0,1] \to [0,1]$  the $q$-ary entropy function defined by $\ent_q(x)= -x\log_{q}x-(1-x)\log_{q}(1-x)+x\log_{q}(q-1)$.
Our main theorem is (this is Theorem \ref{thm:linear-distance} in the paper):
\begin{thm*}
Let $0<\tau\le 1/4$ be a real and positive constant.
There exist quantum codes in $\ker(H_n)$ that encode one logical qubit and have the distance of $2 \tau n$ whenever 
\begin{align}
\ent_2(2\tau) + \ent_{2s+1}(2\tau) \log_2 (2s+1) + o(1) \le 1.\nonumber 
\end{align}
Second, there are explicit quantum codes which encode one logical qubit with a distance of $2 \tau n$ whenever
\begin{align}
1/2-\tau/0.11  \ge \log_2(2s+1) \ent_{2s+1}(2\tau).\nonumber
\end{align}
\end{thm*}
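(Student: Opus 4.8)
The plan is to reduce the statement to the classical‑to‑quantum construction of Theorem~\ref{Thm:main1}, by locating inside $\ker(H_n)$ a large classical code that uses only a binary sub‑alphabet of $\{0,\pm1,\dots,\pm s\}$; throughout I take $s\ge 2$ (the regime of interest). The crux is that every product state $|w\rangle$ with $w\in\{1,2\}^n$ already lies in $\ker(H_n)$: it is annihilated by $H_n^J$ because $w$ has no $0$‑coordinate, by every $P^m_{k,k+1}$ because $\langle 0\leftrightarrow m|_{k,k+1}|w\rangle=\tfrac1{\sqrt2}(\langle 0,m|-\langle m,0|)_{k,k+1}|w\rangle=0$ (no $0$ on sites $k,k+1$), and by every $Q^m_{k,k+1}$ because $\langle 00\leftrightarrow\pm m|_{k,k+1}|w\rangle=\tfrac1{\sqrt2}(\langle 0,0|-\langle m,-m|)_{k,k+1}|w\rangle=0$ (the pair $(w_k,w_{k+1})\in\{1,2\}^2$ is neither $(0,0)$ nor, since $m\ge 1$, the pair $(m,-m)$). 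Hence $\{|w\rangle:w\in\{1,2\}^n\}$ is an orthonormal family of product states in $\ker(H_n)$ — a special case of the ground‑space characterization established above — and so any binary code $B\subseteq\{0,1\}^n$, relabelled coordinatewise into $\{1,2\}$, becomes a classical code of length $n$ over the $(2s+1)$‑ary alphabet $\{0,\pm1,\dots,\pm s\}$ with unchanged minimum distance $d(B)$, all of whose codewords label states of $\ker(H_n)$. Running Alg.~\ref{alg:quBit} (or Alg.~\ref{alg:quDit}) on $B$ with target phase distance $d_Z=2\tau n$, Theorem~\ref{Thm:main1} then outputs — provided $|B|\ge 2V_{2s+1}(2\tau n-1)$ — at least two logical states $|0_L\rangle=\sum_{{\bf c}\in C_0}\alpha^{(0)}_{\bf c}|{\bf c}\rangle$, $|1_L\rangle=\sum_{{\bf c}\in C_1}\alpha^{(1)}_{\bf c}|{\bf c}\rangle$ with $C_0,C_1\subseteq B$, hence lying in $\ker(H_n)$, with bit distance $d(B)$, phase distance $2\tau n$, and overall distance $\min(d(B),2\tau n)$. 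It therefore suffices to produce (and, for the second assertion, to produce explicitly) a binary code $B$ of length $n$ with $d(B)\ge 2\tau n$ and $|B|\ge 2V_{2s+1}(2\tau n-1)$.

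For the first assertion, take $B$ from the binary Gilbert--Varshamov bound, so that $d(B)\ge 2\tau n$ and $|B|\ge 2^{\,n(1-\ent_2(2\tau))-o(n)}$. Using $V_q(\rho n)\le q^{\,n\,\ent_q(\rho)}$ for $\rho\le 1-1/q$ with $q=2s+1$ and $\rho=2\tau$ (the case $2\tau>1-\tfrac1{2s+1}$ already violates the target inequality), the requirement $|B|\ge 2V_{2s+1}(2\tau n-1)$ holds once $n(1-\ent_2(2\tau))-o(n)\ge 1+n\,\ent_{2s+1}(2\tau)\log_2(2s+1)$, that is, once $\ent_2(2\tau)+\ent_{2s+1}(2\tau)\log_2(2s+1)+o(1)\le 1$.

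For the second assertion, replace the Gilbert--Varshamov code by the concatenation of a Reed--Solomon outer code over $\FF_{2^\ell}$, $\ell=\Theta(\log n)$, of rate $R_{\mathrm{out}}$ and relative distance $1-R_{\mathrm{out}}-o(1)$, with a length‑$2\ell$, rate‑$\tfrac12$ binary inner code of relative distance $\ge 0.11=\ent_2^{-1}(\tfrac12)-o(1)$ (such an inner code is found in $\mathrm{poly}(n)$ time, e.g.\ by scanning the Wozencraft ensemble, so the whole construction is explicit and polynomial‑time). The concatenated binary code has length $n$, rate $\tfrac12 R_{\mathrm{out}}$, and relative distance $\ge 0.11(1-R_{\mathrm{out}})-o(1)$; choosing $R_{\mathrm{out}}=1-2\tau/0.11$ gives relative distance $\ge 2\tau-o(1)$ and rate $\tfrac12-\tau/0.11$ (which the target inequality forces to be $\ge 0$, so that $R_{\mathrm{out}}\in[0,1]$ is legitimate). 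Relabelling into $\{1,2\}$ and invoking Theorem~\ref{Thm:main1} as above then succeeds once $2^{\,n(1/2-\tau/0.11)}\ge 2V_{2s+1}(2\tau n-1)$, i.e.\ once $1/2-\tau/0.11\ge \ent_{2s+1}(2\tau)\log_2(2s+1)+o(1)$.

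The routine ingredients here are the entropy/ball estimates, the multiplicativity of rate and relative distance under code concatenation, and the numerical fact $\ent_2^{-1}(1/2)\approx 0.11$. The one step to handle with care is the embedding: replacing the computational basis states $|{\bf c}\rangle$ in the construction of Theorem~\ref{Thm:main1} by the product states $|w\rangle$ with $w\in\{1,2\}^n\subset\ker(H_n)$ must leave the Knill--Laflamme analysis untouched. It does, because these $|w\rangle$ are honest orthonormal, unentangled basis states and the correctable errors are the same weight‑bounded single‑qudit operators on the $(2s+1)$‑dimensional sites — which is also precisely why the error count is governed by $V_{2s+1}$, and not $V_{2s}$ or $V_2$, even though the codewords only ever use two letters.
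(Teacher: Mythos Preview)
Your proposal is correct and follows essentially the same route as the paper: embed a binary code into the sub-alphabet $\{1,2\}\subset\Sigma_*$ so that all codewords label product states in $\ker(H_n)$, then invoke the classical-to-quantum construction with $d_X=d_Z=2\tau n$, using Gilbert--Varshamov for the existential bound and a Justesen-type concatenation for the explicit one. The only minor deviation is that your explicit construction searches the Wozencraft ensemble for a single good inner code (the Zyablov approach) rather than using all ensemble members simultaneously as in Justesen's original construction, but both are polynomial-time and yield the same $0.11$ constant, so this is immaterial.
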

\begin{rem*}
We call the constructions given by optimizing \eqref{eq:gv-construct} and \eqref{eq:j-construct} as the Gilbert-Varshamov (GV)~\cite[Chpt.~1]{sloane} and Justesen construct~\cite[Chpt.~10,~Thm.~11]{sloane} respectively.
The GV construct arises from choosing a random $C$, while the Justesen construct uses the classical Justesen code to define $C$.
\end{rem*}

Our work side-steps the Bravyi-Terhal no-go theorem~\cite{BrT09} which asserts that in one-dimension one cannot have stabilizer codes with system-size dependent distance. 
We also side-step the more general no-go result given by Gschwendtner {\rm et al.} \cite{gschwendtner2019quantum} which states that the degenerate ground space of gapped Hamiltonians can only be QECs with a constant distance.
The reason we can side-step these no-go results is because our work  considers QEC as a strict subspace of the ground space of Hamiltonians.
In particular, this work could pave the way for constructing the quantum codes with linear distance within the ground space of translation invariant local spin chains. We note that had we used {\it all} of the ground space to construct the codes, 
this Hamiltonian could have made the case for the first example of topological order in one-dimension, which has been conjectured to be impossible.  The practical advantage of our work is that such explicit Hamiltonians are easily constructed in the laboratory in the near term, especially in atomic or ion trap architectures. Lastly, the Hamiltonian is a generalization of the highly entangled colored Motzkin spin chain~\cite{movassagh2016supercritical}, which may be of independent interest.

\subsection{Practical considerations} 

Here, we explain why the quantum codes we can construct, particularly the non-stabilizer codes, are interesting from a practical point of view.

A particular family of non-stabilizer codes that our paper can construct, and also show potential in a practical setting, are permutation-invariant quantum codes.
Permutation-invariant codes can be efficiently prepared using either the usual Boykin gate set \cite{boykin} (Clifford + T gates) set \cite{bartschi2019deterministic}, or quantum control type of operations such as Rabi oscillations \cite{init-picode-2019-PRA}, or geometric phase gates \cite{johnsson2020geometric}. 
In particular, geometric phase gates can be performed in a very practical way, because they only need (1) a linear interaction between the collective angular momentum operator of the qubits and a bosonic mode, and the bosonic mode to be initialized in a coherent state (a laser mode). 
The paper \cite{johnsson2020geometric} showed that geometric phase gates can be performed in variety of physical systems. Good candidate systems for geometric phase gates include ion traps and cold neutral atoms.

Furthermore, there is potential to do quantum error correction on permutation-invariant codes in a practical way, as described in Ref.~\cite{ouyang2022quantum}. Ref.~\cite{ouyang2022quantum} proposes to perform quantum error correction by (1) measuring the total angular momentum of sequences of consecutive qubits, (2) performing geometric phase gates, and (3) performing logical gate teleportation operations. Ref.~\cite{ouyang2022quantum} explains how each one of these operations can be performed in a practical way, by using the interaction between collective spin operators and one or more bosonic modes, combined with homodyne and heterodyne measurements on these bosonic modes. 
Hence permutation-invariant codes have strong potential to be practical codes.
Hence our paper's formalism does encompass codes with practical potential.


Our paper's code construction formalism also encompasses a particular class of quantum codes that reside in a decoherence free subspace of errors that act coherently \cite{ZaR97,alber2001stabilizing,alber2003detected,bacon2000universal}, and are hence `error-avoiding' quantum codes. 
There has been recent work on constant-excitation codes, which are quantum error correction codes that avoid coherent errors in one axis (in Refs.~\cite{ouyang2021avoiding,hu2021mitigating}). 
Using the code construction formalism in our paper and also by considering non-stabilizer codes, we construct new constant-excitation codes with improved distance-rate parameters for fixed code length.

We leave questions regarding the fault-tolerance of quantum codes that can be constructed using our framework as an open problem, and we are optimistic about its ultimate resolution.

\subsubsection{Why introduce a Hamiltonian?}
{\it Theoretical considerations}-- This model is of theoretical interest because it allows for the encoding of linear distance codes in its ground space. Encoding quantum information in the ground space of physical Hamiltonians has a long history dating back to Kitaev's toric code~\cite{kitaev2003fault}. A motivation for our work was the nice work of Brandao {\em et al.}~\cite{brandao-PhysRevLett.123.110502} where existence of codes in low energy eigenstates of local translation invariant spin chains were found. In this work we came up with explicit and exact constructions and overcome some of the limitations of that work. These were detailed in the introduction and will be elaborated on below.  


Our work may also be useful in bringing new insights into conformal field theories from quantum codes  \cite{dymarsky2020quantum}, by allowing any classical code to be used in the quantum coding framework.

{\it Applied and engineering considerations}--
Finding new quantum codes can help hasten the dream of fault-tolerant quantum computation. Our method is distinct from other widely used methods to construct quantum codes from classical codes, such as for CSS codes, stabilizer codes, and codeword-stabilized codes. A key feature of our code construction is that we can take as input a general classical code, and demand that the quantum code we construct must be supported on the computational basis vectors that are labelled by these classical codewords. The non-trivial solution of the nullspace then gives us the amplitudes over with the logical zero and logical one of our quantum code will assume.

Our proposal to engineer a two-local Hamiltonian to stabilize quantum information is in line with the ideas utilizing quantum control techniques to suppress the noise before employing quantum error correction. We differ from traditional approaches in quantum control procedures where one typically applies dynamical decoupling pulses to create an essentially a trivial identity Hamiltonian that acts on the system when no quantum gates are performed. In this situation, however, local errors are not energetically penalized. In contrast, one might envision that quantum control methods can engineer a Hamiltonian that energetically penalizes the dominant noise rates that occur in a quantum system before introducing quantum error correction~\cite{PhysRevLett.113.260504,ouyang2019mems}.  

In many practical physical systems, noise is biased and can be dominated by bit-flip or phase-flip type errors. 
We consider a biased noise model that is dominated by bit-flip type errors. 
On a spin-system, we expect our Hamiltonian to energetically penalize bit-flip errors. This would allow our engineered Hamiltonian to greatly suppress the noise rates of the dominant (bit-flip) type of errors. The remaining errors can then be cleaned up using our quantum code with a linear distance. The advantage of engineering our Hamiltonian, as compared to, for instance, the surface code Hamiltonian, is that the Hamiltonian terms that we require are two-local, whereas the surface code requires many-body interactions, which is challenging to realize in practice.

The Hamiltonian we will introduce has the form $H_n=\sum^{n-1}_{k=1} H_{k,k+1}$; see Eq.\eqref{eq:Hn_s-2}. Its ground space has a strict subspace that is the quantum code. We could make the model sums of commuting local terms by considering a new Hamiltonian that skips all (say) even interactions and write $H'_n=\sum_{k\;odd}H_{k,k+1}$. The ground space of $H'_n$ contains the ground space of $H_n$ and therefore also includes the quantum code. One could continue this way and eventually get a Hamiltonian that is trivial ({\em i.e}., no interaction) $H''=I$ for which any quantum code is in the ``ground space''. The price one pays following this crooked path is that Nature will help less and less in suppressing the rate at which errors appear.\\


\subsection{Discussions and open problems}
This paper provides a rigorous framework for the systematic construction of a quantum codes from any classical code. We illustrate the theory through a series of examples and proved that new quantum codes with linear distance and constant rate can be constructed using this work.
Our formalism encapsulates the CSS formalism, and has an intersection with stabilizer and codeword-stabilized  (CWS) formalisms (see subsection \ref{sec:nonlinear-cyclic}). However, there are codes inside the stabilizer and CWS formalism that our formalism does not capture. For example, the five-qubit code is not covered by our formalism because the corresponding classical code has a distance of one. Alg.~\ref{alg:quDit} can construct logical states beyond a logical qubit. The relation of our work to the previous is faithfully depicted in the Venn diagram (Fig.~\ref{Fig:VennCSS}) shows.

Our formalism can take as the input any classical linear, self-orthogonal code, and then derive the corresponding CSS code. This is because CSS codes, which rely on these types of classical codes, can also be completely cast within our framework.
It will be interesting to classify all quantum codes that can be obtained within our framework if one were to start with classical linear, self-orthogonal codes.

Another open problem would be to find an alternative way of getting at our logical qu$d$it construction by directly using the large freedom in choosing codes in our construction (namely the high-dimensional kernel of the matrix $A$ in Eq.~\eqref{eq:Amatrix} and Fig.~\ref{fig:Ax}).
It would also be interesting to see an extension of our formalism to encompass all stabilizer codes, and indeed, any arbitrary quantum code. With the exception of permutation invariant codes (subsection \ref{subsec:permInva}), most of the analysis herein takes the classical codewords and uses them to define a product basis over which the logical quantum states are defined. The extension of our results to include non-product basis for the logical codewords calls for further investigation.


Another interesting problem is to find a Hamiltonian whose ground space is quantum code with a macroscopic distance, and the local and global ground states satisfy a consistency criterion as defined in~\cite{bravyi2011short}. This would then serve as the first example of topological quantum order in one-dimension. We would have found it easy to prove a gap above the degenerate ground space of the local Hamiltonian herein; however, since the code occupies a subspace of the ground space one would need to prove a {\it local gap} lower bound. This means that in order to move from a subspace of the ground space to another subspace an operator with a large support needs to be applied.

\section{Part~1:~Explicit quantum codes from classical codes}

\subsection{Constructing a logical qubit with linear distance}
In this section, we want to design $q$-ary quantum codes with bit-flip distance of $d_X$ and a phase-flip distance of $d_Z$ using a $q$-ary classical code $C\subset \{0,1\dots, q-1\}^n$ as an input.  The minimum distance of the quantum code is then $d = \min(d_X, d_Z)$. 

To correct errors on $q$-ary quantum codes, we consider errors in the generalized Pauli basis. Let us denote by $\omega$ the primitive root of unity $\omega\equiv \exp(2\pi i/q)$. The $Z$ and $X$ type Pauli matrices are respectively given by 
\begin{equation}\label{eq:PauliXZ}
Z = \sum_{j=0}^{q-1}\omega^j\:
|j\>\<j|\;,\quad X = \sum_{j \in \mathbb Z_q} |j\>\<j + 1|\:.
\end{equation}
And any Pauli operator is equivalent to 
$X^a Z^b$ up to a phase for some $a,b = 0,\dots, q-1$.
We consider the set Pauli operators on $n$ qudits $\mathcal P_n = \{X^a Z^b :a,b =0,\dots, q-1 \}^{\otimes n}$ that span the space of linear operators on $n$ qudits.
Given any Pauli in $\mathcal P_n$ that has the form $P = X^{a_1} Z^{b_1} \otimes \dots \otimes X^{a_n} Z^{b_n}$, we denote 
$\wt_X(P) = \wt ({\bf a} )$
 and 
 $\wt_Z(P) = \wt ({\bf b} )$,
 where 
 ${\bf a} = (a_1,\dots, a_n)$, ${\bf b} = (b_1,\dots, b_n)$, and 
 $\wt(\cdot)$ denotes the Hamming weight of the vector. 
 It is easy to see that for any non-negative integer $r \le n$, we have $\sum_{w=0}^{r} |\mathcal Z_w| = V_q(r)$, where $V_q(r) =\sum_{w=0}^r \binom n w (q-1)^{w}$
is the volume of the $q$-ary Hamming ball of radius $r$ as before.

For the KL criteria to hold for a quantum code with logical codewords $|0_L\rangle$ and $|1_L\rangle$ on $n$ qubits with a bit-flip distance of $d_X$ and a phase-flip distance of $d_Z$, it suffices to require that for all generalized Pauli matrices $P$ such that $\wt_X(P)\le d_X-1$ and $\wt_Z(P)\le d_Z-1$, these hold:
\begin{eqnarray}
\langle0_{L}|P|0_{L}\rangle\: & = & \:\langle1_{L}|P|1_{L}\rangle\label{eq:nonDef}\\
\langle0_{L}|P|1_{L}\rangle\: & = & 0\:.\label{eq:orth}
\end{eqnarray}
Eqs.~\eqref{eq:nonDef} and~\eqref{eq:orth} are the \textit{non-deformation}
and \textit{orthogonality} conditions respectively. If we demand
that 
\begin{enumerate}
\item $\text{dist}({C})\ge d_X\;$: the minimum distance of ${C}$
is at least $d_X$
\item $\text{Supp}(0_{L})\cap\text{Supp}(1_{L})=\emptyset\;$: (the logical
codewords $|0_{L}\rangle$ and $|1_{L}\rangle$ are supported on distinct
codewords in ${C}$),
\end{enumerate}
then the orthogonality condition (Eq.~\eqref{eq:orth}) trivially holds.
To verify the non-deformation condition (Eq.~\eqref{eq:nonDef}), we
note that $\langle0_{L}|P|0_{L}\rangle=\langle1_{L}|P|1_{L}\rangle=0$
whenever $P$ is not diagonal. Hence the only non-trivial cases to
be verified are the diagonal generalized Pauli operators, where the set of diagonal Pauli operators of weight
$w$ is 
\begin{equation}
\mathcal{Z}_{w}=\left\{ Z^{z_{1}}\otimes\cdots\otimes Z^{z_{n}}\;:\;\text{wt}({\bf z})=w\right\}.\label{eq:Zw}
\end{equation}

In general, for any diagonal Pauli operator $P$, the expectations
$\langle0_{L}|P|0_{L}\rangle$ and $\langle1_{L}|P|1_{L}\rangle$
are complex numbers. This is in contrast to the case where $P$ are
Kraus operators of the amplitude damping channel, in which all such
expectations are real, or when $P$ are diagonal operators and $q=2$. For Eq.~\eqref{eq:nonDef} to hold, the following has to hold for all Paulis $P$ with a weight at most $d-1$:
\begin{align*}
\text{Re}\left(\langle0_{L}|P|0_{L}\rangle\right)-\text{Re}\left(\langle1_{L}|P|1_{L}\rangle\right) & =0\\
\text{Im}\left(\langle0_{L}|P|0_{L}\rangle\right)-\text{Im}\left(\langle1_{L}|P|1_{L}\rangle\right) & =0\;,
\end{align*}
The quantum code we define
depends on a ``balanced" real non-zero column vector ${\bf x}=(x_{1},x_{2},\dots,x_{m})^T$ 
in the sense that
\[
\sum_{i=1}^{m}x_{i}=0.
\]
Let $x_{k}^{+}=\max\{x_{k},0\}$, 
$x_{k}^{-}=-\min\{x_{k},0\}$, 
and
$x=x_{1}^{+}+\dots+x_{m}^{+}$. 
We can decompose the vector ${\bf x}$
into its positive and negative components ${\bf x}={\bf x}^{+}-{\bf x}^{-}$
where ${\bf x}^{+}=(x_{1}^{+},\dots,x_{m}^{+})^T$ and ${\bf x^{-}}=(x_{1}^{-},\dots,x_{m}^{-})^T$
respectively. We also have $x=\left\Vert {\bf x}^{+}\right\Vert _{1}=\left\Vert {\bf x}^{-}\right\Vert _{1}=\left\Vert {\bf x}\right\Vert _{1}/2$.
For example, using this notation ${\bf x}=(1,2,-1,-1,-1)^T$ gives 
$\smash{{\bf x^{+}}=(1,2,0,0,0)^T}$
and $\smash{{\bf x}^{-}=(0,0,1,1,1)^T}$, and $x=3$. 

In our construction of quantum codes using the classical code ${C}$, we only consider logical codewords that are linear combinations over labels in ${C}$ with only real coefficients. 
Hence, we define the two logical codewords of our quantum code as 
\begin{eqnarray}
|0_{L}\rangle & = & \frac{1}{\sqrt{x}}\left(\sqrt{x_{1}^{+}}|{\bf c}_{1}\rangle+\cdots+\sqrt{x_{m}^{+}}|{\bf c}_{m}\rangle\right),\label{eq:Logical0}\\
|1_{L}\rangle & = & \frac{1}{\sqrt{x}}\left(\sqrt{x_{1}^{-}}|{\bf c}_{1}\rangle+\cdots+\sqrt{x_{m}^{-}}|{\bf c}_{m}\rangle\right).\label{eq:Logical1}
\end{eqnarray}
Since that $x^+_j x^-_j = 0$ for all $j\in[m]$,  the logical states 
$|0_L\>$ and $|1_L\>$  have disjoint supports.

We next clarify the connection between ${\bf x}$ and the non-deformation
conditions by constructing a real matrix $A$ that enforces these conditions. Roughly speaking, this matrix has
rows labeled by diagonal Pauli errors of weight at most $d_Z-1$ and
columns labeled by the states $|{\bf c}_{1}\rangle,\dots,|{\bf c}_{m}\rangle$.
While the ordering of the rows of $A$ is unimportant, we will collect
the rows in groups corresponding to the weights of $P$. The matrix
$A$ is defined by
\begin{align}
A=&\sum_{P\in\mathcal{Z}_{0}}\sum_{k=1}^{m}|P\rangle\langle k|
\notag\\&
+
\sum_{w=1}^{d-1}
\sum_{P\in\mathcal{Z}_w}
\sum_{k=1}^{m}\left\{ \text{Re}\left(\langle{\bf c}_{k}|P|{\bf c}_{k}\rangle\right)|P,0\rangle\langle k|
\right.\notag\\&\left.\quad\quad\quad\quad\quad\quad
+\text{Im}\left(\langle{\bf c}_{k}|P|{\bf c}_{k}\rangle\right)|P,1\rangle\langle k|\right\} .\label{eq:A_first}
\end{align}
In matrix representation $A$ is a wide rectangular matrix and writes
\begin{align}
A=
\begin{bmatrix}
    1 & \cdots & 1\\
a_{2,1} & \cdots & a_{2,m}\\
\vdots & \cdots & \vdots\\
a_{2V_{q}(2t)-1,1} & \cdots & a_{2V_{q}(2t)-1,m}\\
\end{bmatrix} \equiv
\begin{bmatrix}
   {\bf a}_{1}^{T}\\
{\bf a}_{2}^{T}\\
\vdots\\
{\bf a}_{2V_{q}(2t)-1}^{T} 
\end{bmatrix} \label{eq:Amatrix}
\end{align}
where ${\bf a}_{r}=(a_{r,1},\dots,a_{r,m})^T$ are column vectors. The reason for introducing the matrix $A$ is
that the non-deformation condition for correcting $t$ errors using
the vector ${\bf x}$ is enforced by the constraint (see Fig.\ref{fig:Ax})
\begin{figure*}
    \centering
    \includegraphics[scale=0.40]{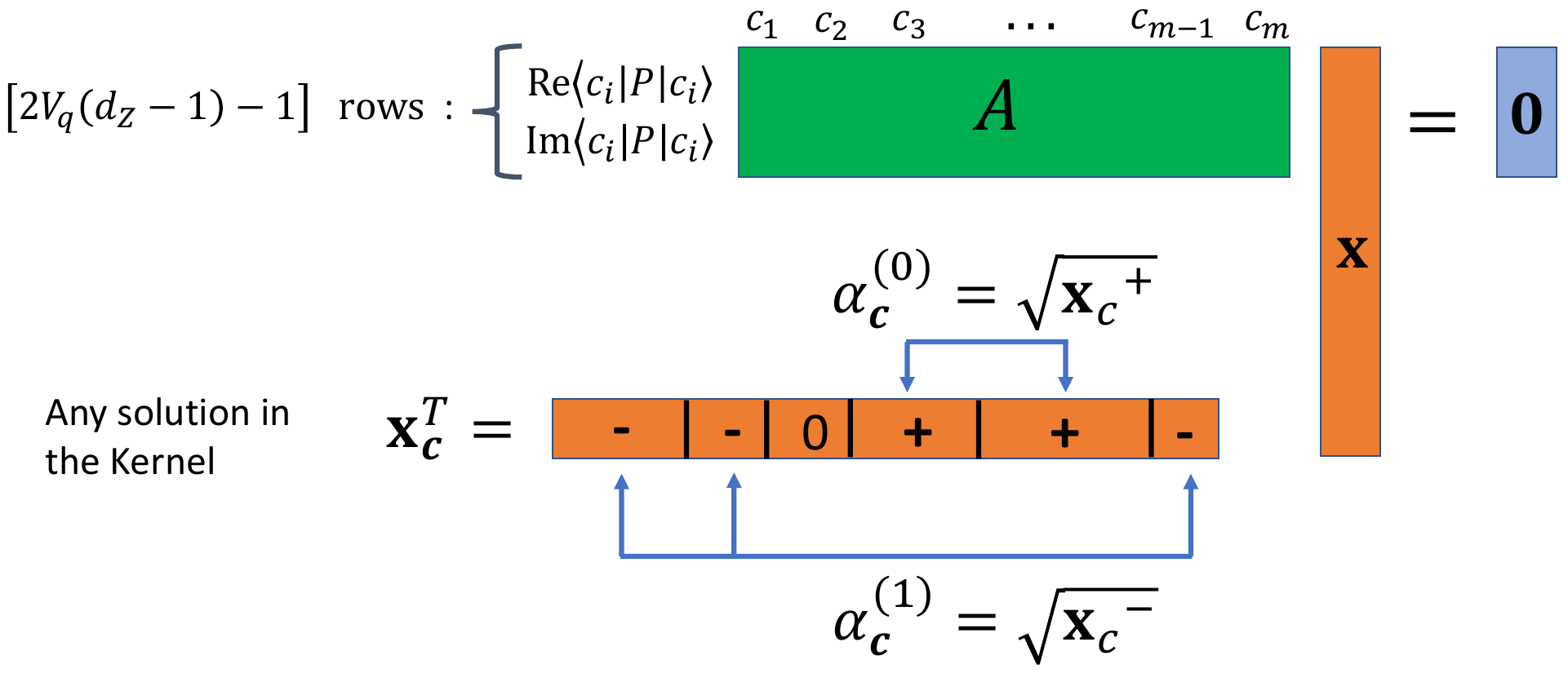}
    \caption{Illustration of $A{\bf x}=0$ in Lemma 1, where the matrix $A$ has $[2V_q(d_z-1)-1 ]$ rows and $m$ columns (top of this figure). The vector ${\bf x_c}$ has $m$ real components and denotes any solution in the kernel. Since the first row of $A$  is all ones ($P$ corresponds to the identity matrix), the vector ${\bf x_c}$ is balanced-- the sum of all entries is equal to zero.}
    \label{fig:Ax}
\end{figure*}
\[
A{\bf x}=0\;.
\]
\begin{lem}\label{Lem:AxZer0}
Let ${\bf x}$ be a non-zero real vector such that $A{\bf x}=0$.
Let $|0_{L}\rangle$ and $|1_{L}\rangle$ be logical codewords that
depend on ${\bf x}$ as in Eqs.~\eqref{eq:Logical0} and \eqref{eq:Logical1}.
Then $\langle0_{L}|P|0_{L}\rangle=\langle1_{L}|P|1_{L}\rangle$ for
any diagonal Pauli of weight at most $d_Z-1$. 
\end{lem}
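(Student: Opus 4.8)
The plan is to show that the condition $A\mathbf{x}=0$, read row by row, is exactly the statement that $\mathrm{Re}\langle 0_L|P|0_L\rangle=\mathrm{Re}\langle 1_L|P|1_L\rangle$ and $\mathrm{Im}\langle 0_L|P|0_L\rangle=\mathrm{Im}\langle 1_L|P|1_L\rangle$ for every diagonal Pauli $P$ of weight at most $d_Z-1$. The key computation is to evaluate $\langle 0_L|P|0_L\rangle$ for a diagonal Pauli $P$. Since $P$ is diagonal and each $|\mathbf{c}_k\rangle$ is a computational basis (product) state, $P|\mathbf{c}_k\rangle = \langle\mathbf{c}_k|P|\mathbf{c}_k\rangle\,|\mathbf{c}_k\rangle$, i.e. $|\mathbf{c}_k\rangle$ is an eigenvector of $P$ with eigenvalue $\lambda_k^{(P)} := \langle\mathbf{c}_k|P|\mathbf{c}_k\rangle$, a complex number of modulus $1$. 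Plugging the definition \eqref{eq:Logical0} of $|0_L\rangle$ and using orthonormality of the $|\mathbf{c}_k\rangle$ (they are distinct classical codewords, hence distinct basis states), I get
\[
\langle 0_L|P|0_L\rangle = \frac{1}{x}\sum_{k=1}^m x_k^+ \,\lambda_k^{(P)}, \qquad \langle 1_L|P|1_L\rangle = \frac{1}{x}\sum_{k=1}^m x_k^- \,\lambda_k^{(P)}.
\]
Subtracting and using $x_k^+ - x_k^- = x_k$ gives $\langle 0_L|P|0_L\rangle - \langle 1_L|P|1_L\rangle = \frac{1}{x}\sum_{k=1}^m x_k\,\lambda_k^{(P)}$.

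Now I split into real and imaginary parts: the difference vanishes iff $\sum_k x_k\,\mathrm{Re}(\lambda_k^{(P)})=0$ and $\sum_k x_k\,\mathrm{Im}(\lambda_k^{(P)})=0$. Comparing with the definition \eqref{eq:A_first} of $A$, the row of $A$ labeled $|P,0\rangle$ has entries $a_{\cdot,k}=\mathrm{Re}(\langle\mathbf{c}_k|P|\mathbf{c}_k\rangle)=\mathrm{Re}(\lambda_k^{(P)})$, and the row labeled $|P,1\rangle$ has entries $\mathrm{Im}(\lambda_k^{(P)})$. Hence these two vanishing conditions are precisely the statements that the $|P,0\rangle$-row and the $|P,1\rangle$-row of $A$, dotted with $\mathbf{x}$, give zero. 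Since $A\mathbf{x}=0$ holds by hypothesis, every such row gives zero, so the difference $\langle 0_L|P|0_L\rangle-\langle 1_L|P|1_L\rangle=0$ for all diagonal $P$ with $\mathrm{wt}(P)\le d_Z-1$. (The top row of $A$, all ones, is the normalization bookkeeping: $A\mathbf{x}=0$ forces $\sum_k x_k=0$, i.e. $\mathbf{x}$ is balanced, which is what makes $x=\|\mathbf{x}^+\|_1=\|\mathbf{x}^-\|_1$ so that $|0_L\rangle$ and $|1_L\rangle$ are both correctly normalized — worth noting but not strictly needed for the stated conclusion about expectations.)

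**The only subtlety** — and the one point I would be careful to state rather than wave at — is the indexing convention that turns the abstract "rows labeled by weight-$w$ Paulis" in \eqref{eq:A_first} into the concrete matrix \eqref{eq:Amatrix}: one must check that for each diagonal Pauli $P$ of weight between $1$ and $d_Z-1$ there really are two dedicated rows of $A$ (a real-part row and an imaginary-part row), and that the $w=0$ block contributes only the single all-ones row (since the unique weight-$0$ Pauli is the identity, with $\langle\mathbf{c}_k|I|\mathbf{c}_k\rangle=1$ real). Granting this bookkeeping, the proof is a two-line linear-algebra identity; there is no real obstacle. I would close by remarking that combined with conditions (1) and (2) preceding the lemma — $\mathrm{dist}(C)\ge d_X$ and disjoint supports — this yields both KL conditions \eqref{eq:nonDef} and \eqref{eq:orth}, which is presumably the next step in the paper.
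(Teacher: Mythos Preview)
Your proof is correct and follows essentially the same route as the paper: expand the two expectation values using the definitions of $|0_L\rangle,|1_L\rangle$, use that distinct $|\mathbf{c}_k\rangle$ are orthogonal eigenvectors of any diagonal $P$, subtract to get $\tfrac{1}{x}\sum_k x_k\langle\mathbf{c}_k|P|\mathbf{c}_k\rangle$, and identify the real and imaginary parts of this sum with the $|P,0\rangle$- and $|P,1\rangle$-rows of $A$ applied to $\mathbf{x}$. The paper also makes the same remark about the all-ones row enforcing normalization.
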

\begin{proof}
Recall that $x=x_{1}^{+}+\dots+x_{m}^{+}$. Since each $|{\bf c}_{k}\rangle$ is a product state, we have $\langle{\bf c}_{j}|P|{\bf c}_{k}\rangle=0$
for all distinct $j$ and $k$, and for all diagonal Pauli of weight at most $d_Z-1$. 
We can use the definitions of logical codewords (Eqs.~\eqref{eq:Logical0}
and \eqref{eq:Logical1}) to write

\begin{align*}
&x\left(\langle0_{L}|P|0_{L}\rangle-\langle1_{L}|P|1_{L}\rangle\right) 
\notag\\
= & \sum_{k=1}^{m}x_{k}^{+}\langle{\bf c}_{k}|P|{\bf c}_{k}\rangle-\sum_{k=1}^{m}x_{k}^{-}\langle{\bf c}_{k}|P|{\bf c}_{k}\rangle\\
 = & \sum_{k=1}^{m}x_{k}\langle{\bf c}_{k}|P|{\bf c}_{k}\rangle
\end{align*}
where on the second line we used $x_{k}=x_{k}^{+}-x_{k}^{-}$. Using
Eq.~\eqref{eq:A_first} we see that $\langle{\bf c}_{k}|P|{\bf c}_{k}\rangle
=
\langle P,0|A|k\rangle
+
i \langle P,1|A|k\rangle$.
Therefore
\begin{align*}
&x\left(\langle0_{L}|P|0_{L}\rangle-\langle1_{L}|P|1_{L}\rangle\right) 
\notag\\
= & \sum_{k=1}^{m}x_{k}\left(\langle P,0|A|k\rangle+
i \langle P,1|A|k\rangle\right)\\
  = & \sum_{k=1}^{m}\left(\langle P,0|Ax_{k}|k\rangle+
 i \langle P,1|Ax_{k}|k\rangle\right)\\
  = & \langle P,0|A{\bf x}
 + i\langle P,1|A{\bf x}\;,
\end{align*}
because ${\bf x}=\sum_{k=1}^{m}x_{k}|k\rangle$. Since the first row
of $A$ is all ones, $A{\bf x}=0$ implies $\smash{\langle0_{L}|0_{L}\rangle=\langle1_{L}|1_{L}\rangle.}$
Moreover, in the above, $P$ is an arbitrary diagonal Pauli of weight at most $d_Z-1$,
and the requirement $A{\bf x}=0$ implies $\langle0_{L}|P|0_{L}\rangle=\langle1_{L}|P|1_{L}\rangle$
for any diagonal Pauli of weight at most $d_Z-1$. 
\end{proof}
The condition $A{\bf x}=0$ is satisfied for any ${\bf x}\in\ker(A)$.
It is then important to understand the structure of the kernel. Since
$A$ is real, any vector in its kernel must be real, which
we then use to design an explicit quantum code that obeys the generalized
orthogonality conditions above. Whenever $|{C}| \ge 2V_{q}(d_Z-1) $, that
is the number of codewords in ${C}$ is strictly greater than the number of rows in $A$, 
by the rank-nullity theorem, $A$ must have a non-trivial kernel. We thus have the following existence theorem for quantum codes that reside within the ground space of $H_n$.
\begin{lem}\label{lem:main-result}
(Existence) Let ${C}$ have a minimum distance at least $d_X$. If $|{C}|\ge 2V_{q}(d_Z-1)$,
then Alg.~\ref{alg:quBit} constructs a quantum code with one logical qubit and with a bit-flip and phase-flip distance of $d_X$ and $d_Z$ respectively. Moreover, this quantum code can be constructed from any nonzero ${\bf x}\in\ker(A)$.
\end{lem}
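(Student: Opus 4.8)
The plan is to verify that the logical codewords produced by Alg.~\ref{alg:quBit} satisfy the Knill--Laflamme criteria in the form of Eqs.~\eqref{eq:nonDef} and~\eqref{eq:orth}, and then to package this into the statement about distances. First I would observe that the hypothesis $|C|\ge 2V_q(d_Z-1)$ is exactly the statement that the matrix $A$ from Eq.~\eqref{eq:A_first} has more columns ($m=|C|$, or at least $m\ge 2V_q(d_Z-1)$ after selecting a suitable subset) than rows ($2V_q(d_Z-1)-1$, counting the real/imaginary split and the all-ones normalization row). Hence by the rank--nullity theorem $\ker(A)$ is nontrivial, and since $A$ is a real matrix its kernel contains a nonzero real vector ${\bf x}$; this is the vector the algorithm uses. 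Any such ${\bf x}$ is automatically ``balanced'' ($\sum_i x_i = 0$) because the first row of $A$ is all ones, so the decomposition ${\bf x}={\bf x}^+-{\bf x}^-$ with $\|{\bf x}^+\|_1=\|{\bf x}^-\|_1=x$ is valid and Eqs.~\eqref{eq:Logical0}--\eqref{eq:Logical1} define genuine normalized states.

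Next I would discharge the two KL conditions. The orthogonality condition~\eqref{eq:orth} is immediate: since $x_k^+ x_k^- = 0$ for every $k$, the supports of $|0_L\rangle$ and $|1_L\rangle$ are disjoint subsets of $C$, and because each $|{\bf c}_k\rangle$ is a product (computational-basis) state, any Pauli $P$ either maps $\mathrm{Supp}(1_L)$ outside $\mathrm{Supp}(0_L)$ (giving $\langle 0_L|P|1_L\rangle = 0$) or, if $P$ has $\wt_X(P)\le d_X-1$ and acts within the support, it would identify two distinct codewords of $C$ at Hamming distance $\le d_X-1$, contradicting $\mathrm{dist}(C)\ge d_X$; either way the overlap vanishes. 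For the non-deformation condition~\eqref{eq:nonDef} I would invoke Lemma~\ref{Lem:AxZer0} directly: for diagonal Paulis $P$ of weight at most $d_Z-1$, the $X$-part being trivial, $A{\bf x}=0$ gives $\langle 0_L|P|0_L\rangle = \langle 1_L|P|1_L\rangle$; for non-diagonal $P$ with $\wt_X(P)\le d_X-1$, both matrix elements vanish separately by the same product-state/minimum-distance argument used for orthogonality (a nonzero $X$-part shifts a codeword to a string at Hamming distance $\le d_X-1$ from it, which cannot lie in $C$ unless it equals the original, impossible since $\wt_X(P)\ge 1$). Combining the diagonal and non-diagonal cases covers all $P$ with $\wt_X(P)\le d_X-1$ and $\wt_Z(P)\le d_Z-1$, so the sufficient condition stated before Eq.~\eqref{eq:nonDef} is met and the code corrects all such errors; the bit- and phase-flip distances are therefore $d_X$ and $d_Z$, with overall distance $\min(d_X,d_Z)$.

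Finally I would note the ``moreover'' clause is built into the argument: nothing above used any particular choice of ${\bf x}\in\ker(A)$ beyond its being nonzero and real, so the construction goes through for every such vector, and Alg.~\ref{alg:quBit} simply fixes one. The main obstacle, and the point deserving the most care, is the bookkeeping that the non-diagonal Paulis of bounded $X$-weight genuinely have vanishing matrix elements in \emph{both} diagonal and off-diagonal positions --- this is where the hypothesis $\mathrm{dist}(C)\ge d_X$ (assumption~1 from the discussion preceding the lemma) is essential and must be stated cleanly, since Lemma~\ref{Lem:AxZer0} alone only handles the diagonal sector. Everything else is the rank--nullity count and the already-established Lemma~\ref{Lem:AxZer0}.
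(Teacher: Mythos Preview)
Your proposal is correct and follows essentially the same approach as the paper: the rank--nullity count (using that $A$ has $2V_q(d_Z-1)-1$ rows and $m\ge 2V_q(d_Z-1)$ columns) to obtain a nonzero real kernel vector, Lemma~\ref{Lem:AxZer0} for the diagonal non-deformation condition, and the disjoint-support plus minimum-distance argument for orthogonality and the non-diagonal case. If anything, you have spelled out the non-diagonal bookkeeping more carefully than the paper's surrounding text does.
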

 This lemma shows that we can derive the logical codewords of our quantum code from a classical code, and only requires ${C}$ to have a minimum distance of $d_X$ and for $|{C}|$ to be at least twice the size of the $q$-ary Hamming ball of radius $d_Z-1$. 

We proceed to derive a more explicit expression for the entries of
the matrix $A$ in Eqs.~\eqref{eq:A_first} and \eqref{eq:Amatrix}. 
We will show that these entries are elements of a finite-sized set given by
\begin{equation}
\{\cos(2\pi k/q)\;,\;  \sin(2\pi k/q) \::\:k = 0,\dots,q-1\}\;.\label{eq:gamma}
\end{equation} 
For any classical string ${\bf c}=(c_{1},\dots,c_{n})^T\in{C}$ with $c_i = 0,1,\dots,q-1$, the quantum state is $|{\bf c}\rangle=|c_1,c_2,\dots,c_n\rangle$, and 
when $P = Z^{z_1}\otimes \dots \otimes Z^{z_n}$, we have
\begin{align}
    &\<{\bf c}|P|{\bf c}\> 
    = \prod_{j=1}^n \<c_j|Z^{z_j}|c_j\>\notag\\
    =& \prod_{j=1}^n \<c_j|\sum_{k\in\Sigma_c} (\omega^k)^{z_j} |k\>\<k|c_j\>\\
    =&\prod_{j=1}^n \omega^{z_j c_j} =\omega^{{\bf z}^{T}{\bf c}}
    \\
    =&\cos\left(\frac{2\pi\,{\bf z}^{T}{\bf c}}{q}\right)+i\sin\left(\frac{2\pi\,{\bf z}^{T}{\bf c}}{q}\right).\label{eq:cosine-sine}
\end{align} 
The first row of $A$ is all ones, and all other entries are given
by
\begin{align}
A'&=\sum_{w=1}^{d_Z-1}
\sum_{\wt({\bf z}) = w}
\sum_{{\bf c}\in{C}}
\left[
\cos\left(\frac{2\pi\,{\bf z}^{T}{\bf c}}{q}\right)|{\bf z},0\rangle\langle{\bf c}|\right.\notag\\
&\qquad\qquad\left.+
\sin\left(\frac{2\pi\,{\bf z}^{T}{\bf c}}{q}\right)|{\bf z},1\rangle\langle{\bf c}|
\right].
 \label{eq:Aprime}
\end{align} 
Since ${\bf z}^{T}{\bf c}$ is always an integer, it follows that
the entries of $A$ must take values from the set in \eqref{eq:gamma}. 
\begin{rem}
In Eq.~\eqref{eq:Aprime} finding a solution in the kernel amounts to finding linear combination of roots of unity that vanish. See for example section III and references in~\cite{movassagh2017green}.   There is a vast literature (see chapter 7 in~\cite{sloane}) on this topic and properties of the underlying code that controls the values of the integers ${\bf z}^{T}{\bf c}$ can in principle be utilized to give analytic solutions. 
\end{rem}

\begin{rem}
When $q=2$, all rows in $A'$ that are labeled by $|{\bf z},1\>$ are equal to zero, because the argument of the sine is always an integer multiple of $\pi$.
\end{rem}

This section is summarized in the following algorithm:

\begin{alg}\label{alg:quBit} {\bf Input:} A classical code $C\subset \{0,1,\dots, q-1\}^n$ with $m\equiv |C|$.
\begin{itemize}
    \item Form the matrix $A$ defined by Eq.\eqref{eq:A_first}.
    \item Solve $A{\bf x} = 0$ to find  ${\bf x}\ne 0$. Define ${\bf x^+},{\bf x^-}\ge 0$ such that ${\bf x=x^+-x^-}$ as in Lemma \ref{Lem:AxZer0}
    \item Let $C_0$ be the set of codewords ${\bf c}_k$ with $k\in supp({\bf x^+})$. And let $C_1$ be the set of codewords ${\bf c}_k$ with $k\in supp({\bf x^-})$. 
    \item For all ${\bf c} \in C_0$, assign $\alpha^{(0)}_{\bf c} = \sqrt{x^+_c}$, and for all ${\bf c} \in C_1$, assign $\alpha^{(1)}_{\bf c} = \sqrt{x_c^-}$. 
\end{itemize}
{\bf Output:} A logical quantum bit with codewords $|0_L\rangle$ and $|1_L\rangle$ as defined in Eqs.~\eqref{eq:Logical0} and \eqref{eq:Logical1}. The code distances are  $d_X=\text{dist}(C)$ and $d_Z$ that satisfies $ m>2V_q(d_Z-1)-1$.\\
\end{alg}

\subsubsection{Limitations of the formalism}
\label{sec:6-3-3-code}

Can we use the classical repetition code with codewords 0\dots 0 and 1\dots 1 to make a QEC code with distance 2? 
There are only two linearly independent columns in the matrix $A$. Hence, we can only have a non-trivial kernel if we have less than two rows. This is not sufficient for even detecting a single  error, which would require a distance to be equal to two. 
This is consistent with the fact that the quantum repetition code, with codespace spanned by $|0\dots 0\>$ and $|1\dots 1\>$, cannot deal with phase errors.

Next, we consider a classical code $C$, and like to design a corresponding QEC code with maximum distance.
As an example, we consider a classical code that comprises of the codewords
000000,
001110,
010101,
011011,
100011,
101101,
110110,
and
111000. 
We like to know if our framework allows us to obtain a QEC code with distance 3.
However, when we construct the $A$ matrix using our formalism, we find that its nullspace has dimension 0. 
Hence, there is no QEC code with support on computational states labelled by $C$ with distance 3.

However if the designed distance is reduced to 2, then we can use our formalism to derive the following error-detecting quantum code
\begin{align}
|0_L\> &=
\frac{1}{2}\left(|0,0,0,0,0,0\>+
|0,1,1,0,1,1\>
\right.\notag\\&\quad\left.+
|1,0,1,1,0,1\>+
|1,1,0,1,1,0\>\right)
\\
|1_L\> &= 
\frac{1}{2}\left(
|0,0,1,1,1,0\>+
|0,1,0,1,0,1\>\right.\notag\\&\quad\left.+
|1,0,0,0,1,1\>+
|1,1,1,0,0,0\>\right)
\end{align}

\subsection{Constant-excitation codes with better parameters}
When spurious classical fields interact uniformly with a system of spins, an effective unitary evolution $U_\theta= \exp(i\theta(Z_1 + \dots Z_n))$ acts on the spins. 
Here $Z_i$ applies a phase flip on the $i$th qubit and the identity operation on the remaining qubits.
The unitary operator $U_\theta$, known as a coherent error \cite{ouyang2021avoiding}, can severely damage the qubits, 
especially when quantum control techniques twirl it into a Pauli error basis which cause $U_\theta$ can introduce multiple phase-flip errors on the physical system.

Constant-excitation (CE) codes on $n$ qubits with excitation number $w$ are quantum codes that are supported on computational basis states $|{\bf x}\>$ where ${\bf x}$ are $n$-bit vectors of Hamming weight $w$.
Since any CE code is a subspace of a fixed eigenspace of the operator $Z_1 + \dots +Z_n$, CE codes are left invariant under the action of any coherent error $U_\theta$ for any $\theta$,
and hence completely avoid coherent errors \cite{ouyang2021avoiding}. 
Therefor, CE codes are advantageous to use when coherent errors are the dominant source of errors in quantum systems.

In the construction of \cite{ouyang2021avoiding}, the CE codes are obtained by concatenating any quantum code with the dual-rail code \cite{KLM01}. CE codes obtained through this construction must have an even number of qubits and excitation number equal to half the number of qubits. 
In \cite{hu2021mitigating}, constructions of CSS-type CE codes are also discussed, where the number of qubits can be odd (such as in a [[5,1,2]] CE code). 

Our formalism in this paper further develops the theory of constant excitation codes. 
To obtain constant-excitation code of distance $d$, the key ingredient needed is a classical constant weight code on $n$ bits of distance $d$ that has sufficiently many codewords. 
Now denote $A(n,d,w)$ as the maximal number of classical codewords in a constant weight code on $n$ bits of distance $d$ and weight $w$.
Then our framework can derive CE codes from such constant weight codes whenever
\begin{align}
  A(n,d,w) \ge \sum_{k=0}^{d-1} \binom {n}{k} 
  \label{CE:codeword-numer-req}
\end{align}
for some $w$.
Let $\underline A(n,d,w)$ denote any lower bound to $A(n,d,w)$. Then $\underline A(n,d,w) \ge \sum_{k=0}^{d-1} \binom {n}{k}$ implies that 
$A(n,d,w) \ge \sum_{k=0}^{d-1} \binom {n}{k}$.
Hence we seek lower bounds on $A(n,d,w)$, which we can obtain from the tables in \cite{graham1980lower}. 
Note that any constant weight code has $d\ge 2$, and hence we trivially have $A(n,2,w) = \binom n  w$. Since $\binom 4 2 \ge 1 + 4$, our framework allows us to construct a four-qubit CE code with two excitations.

Using lower bounds on $A(n,d,w)$ from \cite{graham1980lower} where $d=2,4$, we obtain a table of CE codes, specifying what $n_{\rm new}$ and $n_{\rm prev}$ are, where $n_{\rm new}$ and $n_{\rm prev}$ denote how many qubits previous CE codes \cite{ouyang2021avoiding,hu2021mitigating} and our CE codes need respectively.

The table illustrates that our CE codes, for a fixed distance, use fewer qubits than previous CE codes.

\begin{center}
\begin{tabular}{|c|c|c|}
\hline 
Distance ($d$) & $n_{\rm prev}$ & $n_{\rm new}$
\tabularnewline
\hline 
\hline 
2 & 5\cite{hu2021mitigating}, 8 \cite{ouyang2021avoiding} & 4 ($w=2$)
\tabularnewline
\hline 
4 & 20 \cite{ouyang2021avoiding} & 15 ($w=7,8$)\tabularnewline
\hline 
\end{tabular}\medskip{}
\par\end{center}

In the above table, for the constructions using the concatenation framework of \cite{ouyang2021avoiding}, the codes used are the four-qubit error detection code concatenated with the dual-rail code, and a [[10,1,4]] stabilizer code concatenated with the dual-rail code.
While there might be some other error model in which the dual-rail code might outperform the CE codes we construct, this analysis is beyond the scope of our paper.

Using the quantum Gilbert-Varshamov bound for quantum codes we know that $[[n,1,d]]$ asymptotically exist if 
$1-{\rm Ent}_4(d/n)\ge 0$. 
Let $d_{\rm QGV} \approx 0.18929$ be such that $1-{\rm Ent}_4(d_{\rm QGV}/n) = 0$.
Then concatenation of such codes with dual-rail codes gives CE codes with asymptotic relative distance of $d_{\rm QGV}/2 \approx 0.0946$. 
Such CE codes do not attain the quantum Gilbert Varshamov bound.

Next we turn our attention to the asymptotic structure of constant weight codes \cite{graham1980lower}
When the Gilbert bound is applied to constant weight codes, one can obtain \cite[Theorem]{graham1980lower}, which provides the lower bound
\begin{align}
    A(n,2\delta,w) \ge 
    \frac
    {\binom n w}
    {\sum_{i=0}^{\delta-1}\binom w i \binom {n-w} i }
    \ge
    \frac
    {\binom n w}
    {\binom w \delta \binom {n-w}\delta}    .
\end{align}
Hence to construct asymptotic CE codes with $w=n/2$, it suffices to require that 
\begin{align}
    {\binom n {n/2}}
    {\binom {n/2} \delta}^2 
    \ge \sum_{k=0^{2\delta-1}}\binom n k.
\end{align}
or equivalently
\begin{align}
1 - 2{\rm Ent}_2(d/n) \ge 0,
\end{align}
where $d=2\delta$. 
The maximum possible asymptotic value of $d/n$ is approximately $0.11$ which is larger than $0.0946$.
Hence we prove that there are CE codes that asymptotically outperform those obtained using the concatenation construction of \cite{ouyang2021avoiding}.

\subsection{Constructing logical states with linear distance and constant rate}

In building a single logical qubit we used any one non-zero solution in the
kernel of $A$ to identify two disjoint subsets $C_{0}$ and $C_{1}$.
Since the number of rows is $2V_{q}(d_{Z}-1)-1$ and the number of
columns by Gilbert-Varshamov bound satisfies $m\ge q^{n}/V_{q}(d_{Z}-1)$,
from Lemma \ref{lem:main-result} and the asymptotic relation of the volume of the Hamming ball and the $q$-ary entropy function,
we find that the ratio of the number of columns to the number
of rows is asymptotically exponentially large
\begin{equation}
\frac{1}{n}\log_{q}\left(\frac{m}{\#\text{ rows}}\right)\ge1-2{\rm Ent}_{q}\left(\frac{d_{Z}}{n}\right)
\label{eq:number_of_blocks}
\end{equation}
where $d_{Z}/n\in[0,(q-1)/q]$. Here we exploit this to derive roughly $q^{n(1-2{\rm Ent}_{q}(d_{Z}/{n}))}$
logical quantum states, and hence obtaining a linear rate $r\approx(1-2{\rm Ent}_{q}(d_{Z}/{n}))$.
Below we think of $M\propto m/(\#\text{ rows})$. 

We now generalize the construction of two quantum states (a
logical qubit) to more quantum states (a logical qu{\it d}it).
Suppose we identify $M$ subsets $\{C_{0},C_{1},\dots,C_{M-1}\}$
such that $C_{i}\subset C$ for all $i\in\{0,\dots,M-1\}$ and that
the subsets are pairwise disjoint $C_{i}\cap C_{j}=\emptyset$ for
all $i\ne j$. Define the logical qudit as 
\begin{align}
|0_{L}\rangle & = \sum_{\boldsymbol{c}\in C_{0}}\alpha_{\boldsymbol{c}}^{(0)}|\boldsymbol{c}\rangle\: \notag\\
|1_{L}\rangle&=\sum_{\boldsymbol{c}\in C_{1}}\alpha_{\boldsymbol{c}}^{(1)}|\boldsymbol{c}\rangle
\notag\\
&\vdots \notag \\
|(M-1)_{L}\rangle&=\sum_{\boldsymbol{c}\in C_{M-1}}\alpha_{\boldsymbol{c}}^{(M-1)}|\boldsymbol{c}\rangle\:.\label{formM-qudit}
\end{align}
where for every $j=1,\dots, M-1$, we have the normalization
\begin{align}
& \sum_{\boldsymbol{c}\in C_{0}}\left(\alpha_{\boldsymbol{c}}^{(0)}\right)^{2}=\sum_{\boldsymbol{c}\in C_{j}}\left(\alpha_{\boldsymbol{c}}^{(j)}\right)^{2}
\end{align}
Clearly these states are orthonormal. The KL criteria
then writes
\begin{equation}
\Pi\, P\,\Pi=c_{P}\,\Pi\label{eq:KL_generalized}
\end{equation}
where 
$\Pi=|0_{L}\rangle\langle0_{L}|+|1_{L}\rangle\langle1_{L}|+\cdots+|(M-1)_L\rangle\langle (M-1)_L|$ is a projector of (potentially exponentially large) rank $M$.
The orthogonality and non-deformation conditions now write:
\begin{align*}
\langle i_{L}|P|j_{L}\rangle & =c_{P}\:\delta_{i,j}\;,\quad i,j\in\{0,1,2,\dots,M-1\}\:.
\end{align*}
We first check orthogonality. Since the subsets are pairwise disjoint and futhermore $C$ has a minimum distance of $d_X$,
we have $\langle i_{L}|P|j_{L}\rangle=0$ for all $i\ne j$ for and
all diagonal Paulis. Moreover, $\langle i_{L}|P|j_{L}\rangle=0$ if
$1\le\text{wt}_{X}(P)\le d_X-1$, which is inherited from the classical
code's distance $d_X$. We now turn our attention to the non-deformation
condition. As before it is clear that $\langle i_{L}|P|i_{L}\rangle=0$
for all $i\in\{0,1,\dots,M-1\}$ and Paulis with $1\le\text{wt}_{X}(P)\le d_X-1$.
This follows from the distance of the code as before. Therefore, it
is again sufficient to prove the non-deformation condition for diagonal
Paulis. We need to find the subsets $C_{1},C_{2},\dots,C_{M-1}$
such that
\begin{equation}
\langle0_{L}|P|0_{L}\rangle=\langle1_{L}|P|1_{L}\rangle=\dots=\langle (M-1)_{L}|P|(M-1)_{L}\rangle\:.\label{eq:GeneralizedDeformation}
\end{equation}
Our method constructs the sets $C_{0},\dots,C_{M-1}$ such that
they are disjoint and satisfy the foregoing equation. Unlike the qubit case, our proof technique
will not be seeking a solution in the kernel of $A$ anymore, rather
we recursively build the logical quantum states.

An exciting open problem would be to see
an alternative construction that uses the exponentially large kernel
of the matrix $A$.

For every new logical state we need to call the following algorithm once:

\begin{alg}\label{alg:quDit}
{\bf Input:}  A set of $2V_{q}(d_{Z}-1)$
columns of $A$ and a vector ${\bf b}$ of size $\smash{2V_{q}(d_{Z}-1)-1}$. 
\begin{itemize}
\item Check that the augmented homogeneous linear system as described in Section 5 of~\cite{dines1926positive} does not have a row that is all of the same sign.
\item If no row has non-zero entries of the same sign, apply the algorithm of~\cite{dines1926positive} to find a point in the feasible set, i.e., a solution ${\bf x}\ge 0$. End the algorithm. 
\item If there is a row that is all of same sign, augment the set of columns by one and repeat. If all the columns are exhausted, then output fail.
\end{itemize}
{\bf Output:} If succeeded, output a solution ${\bf x}\ge 0$ in the feasible set.
\end{alg}

If recursively successful, then the algorithm can at least be called $m/2V_q(d_Z-1)=q^{n(1-2{\rm Ent}_{q}(d_{Z}/{n}))}$ times. The success is guaranteed if at any step of recursion no row of all the same sign is encountered.

We demonstrate the recursive construction of logical quantum states by first building a qu\textit{trit}
(three logical states). Recall that $A$ has $2V_{q}(d_{Z}-1)-1$
rows whose first row is all ones, we now proceed to find a natural
partitioning of the columns of $A$. To construct a logical qubit,
we use the first $2V_{q}(d_{Z}-1)$ columns of $A$ as follows. Let
the matrix $A'_{1}$ be defined by the first $2V_{q}(d_{z}-1)$ columns
of $A$. Now solve for the solution $\boldsymbol{x_{1}}$ in $A'_{1}\boldsymbol{x=0}$,
where $\boldsymbol{x_{1}}$ has $2V_{q}(d_{Z}-1)$ components. This
matrix equation certainly has a non-trivial kernel because it has
more columns than rows. Just as in the one logical qubit construction
in Alg.~\ref{alg:quBit} we will find an $\boldsymbol{x=x^{+}-x^{-}}$ where
$\boldsymbol{x^{+}}\ge0$ and $\boldsymbol{x}^{-}\ge0$, $\sum_{i}x_{i}^{+}=\sum_{i}x_{i}^{-}$,
and that $supp(\boldsymbol{x^{+}})\cap supp(\boldsymbol{x^{-}})=\emptyset$.
We build the logical qubit exactly as in Eqs.~\eqref{eq:Logical0} and \eqref{eq:Logical1}. 

We then rearrange the columns of $A'_{1}$ according to the supports of
$\boldsymbol{x^+}$ and $\boldsymbol{x^-}$ and write the concatenated matrix $[A_{1} A_{2}]$,
where $A_{1}$ has a set of columns labeled by codewords $c_{s}$ such that $s\in supp(\boldsymbol{x^{+}})$
and $A_{2}$ has columns labeled by codewords $c_{s}$ such that $s\in supp(\boldsymbol{x^{-}})$.
The new equation being satisfied is 
\[
[A_{1}A_{2}]\left[\begin{array}{c}
\boldsymbol{x_{1}^{+}}\\
\boldsymbol{-x_{1}^{-}}
\end{array}\right]=0\;,
\]
where $\boldsymbol{x_{1}^{+}},\boldsymbol{x_{1}^{-}}\ge0$. To build
the third logical state, we select a set of $2V_{q}(d_{Z}-1)$ new
columns out of $A$ and solve 
\[
[A_{2}A_{3}]\left[\begin{array}{c}
\boldsymbol{-x_{1}^{-}}\\
\boldsymbol{x_{2}}
\end{array}\right]=0\;,\quad\boldsymbol{x_{2}\ge0},
\]
where $\boldsymbol{-x_{1}^{-}}$ is treated fixed from the previous
step and we solve for a solution $\boldsymbol{x_{2}}\ge0$. A solution
exists as $A_{3}\boldsymbol{x_{2}}=A_{2}\boldsymbol{x_{1}^{-}}$ is
under-constrained. Although this can be formally thought of as a linear
programming problem, where the objective function is just zero and
a point in the feasible set is sought, we proceed differently and
give an explicit algorithm based on Dines' Annals of Math (1926)~\cite{dines1926positive}.

Once $\boldsymbol{x_{2}}\ge0$ is found this way, we proceed to solve
\[
[A_{3}A_{4}]\left[\begin{array}{c}
\boldsymbol{x_{2}}\\
\boldsymbol{-x_{3}}
\end{array}\right]=0\;,\quad\text{\ensuremath{\boldsymbol{x_{3}}}\ensuremath{\ge0},}
\]
where now $\boldsymbol{x_{3}}$ is the unknown. Just as before this
amounts to solving $A_{4}\boldsymbol{x_{3}}=-A_{3}\boldsymbol{x_{2}}$
where the right-hand side is a known vector. Continuing this way we
can build $2\le M\le q^{n(1-2{\rm Ent}_{q}(\frac{d_{Z}}{n}))}$ logical states.

The results of this section prove the following main theorem of part 1 of this work:
\begin{thm}\label{Thm:main1}
Take a random classical code $C$ of length $n$ on a $q$-ary alphabet with
a minimum distance of $d_X$. 
Then almost surely with probability equal to 1 as $n$ goes to infinity, we have $|C|\ge 2V_{q}(d_Z-1)$.
Moreover, using Alg.~\ref{alg:quBit} and multiple calls to Alg.~\ref{alg:quDit} explicitly, we will with probability 1 as $n$ goes to infinity derive the $M$ quantum logical states in \eqref{formM-qudit} with bit- and phase-flip distances of $d_X$ and $d_Z$ respectively. The overall distance is $\min(d_X,d_Z)$. 
\end{thm}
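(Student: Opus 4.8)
The plan is to bootstrap from the one-qubit case already established by Lemma~\ref{lem:main-result}, then run the recursion of Alg.~\ref{alg:quDit} and check that it preserves the Knill--Laflamme (KL) invariants, and finally count the number of states produced. For the base case $M=2$, the hypothesis $|C|\ge 2V_q(d_Z-1)$ says that $A$ has strictly more columns ($m=|C|$) than rows ($2V_q(d_Z-1)-1$), so $\ker(A)\neq\{0\}$; any nonzero $\mathbf{x}\in\ker(A)$ has $\sum_i x_i=0$ since the first row of $A$ is all ones, hence both $\mathbf{x}^+$ and $\mathbf{x}^-$ are nonzero and \eqref{eq:Logical0}--\eqref{eq:Logical1} define genuine normalized logical states on disjoint supports. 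Lemma~\ref{Lem:AxZer0} then gives the non-deformation identity $\langle 0_L|P|0_L\rangle=\langle 1_L|P|1_L\rangle$ for every diagonal Pauli $P$ with $\wt_Z(P)\le d_Z-1$, while the discussion preceding that lemma (disjoint supports together with $\text{dist}(C)\ge d_X$) handles orthogonality and the vanishing of $\langle i_L|P|i_L\rangle$ for off-diagonal $P$ with $\wt_X(P)\le d_X-1$. So the base case is exactly Lemma~\ref{lem:main-result}.

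For the recursion I would carry the following invariant after $k$ states have been built: the supports $C_0,\dots,C_{k-1}$ are pairwise disjoint subsets of $C$, and, writing $A^{(j)}$ for the submatrix of $A$ whose columns are indexed by $C_j$ and $\mathbf{y}_j\ge 0$ for the vector of squared coefficients of $|j_L\rangle$, one has the chain $A^{(0)}\mathbf{y}_0 = A^{(1)}\mathbf{y}_1 = \cdots = A^{(k-1)}\mathbf{y}_{k-1} =: \mathbf{v}$, whose first coordinate $v_1$ equals the common normalization $\|\mathbf{y}_j\|_1=1$ (again because the top row of $A$ is all ones). The base case supplies this with $\mathbf{y}_0=\mathbf{x}^+$, $\mathbf{y}_1=\mathbf{x}^-$, the chain reducing to $A^{(0)}\mathbf{x}^+=A^{(1)}\mathbf{x}^-$, i.e.\ the two halves of $A\mathbf{x}=0$ after reordering the columns by $\mathrm{supp}(\mathbf{x}^+)$ and $\mathrm{supp}(\mathbf{x}^-)$. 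To produce $|k_L\rangle$, pick $2V_q(d_Z-1)$ columns of $A$ not used so far, forming a block $A^{(k)}$ with one more column than rows, and solve $A^{(k)}\mathbf{y}_k=\mathbf{v}$ for $\mathbf{y}_k\ge 0$ via the Dines procedure of Alg.~\ref{alg:quDit}~\cite{dines1926positive}; set $C_k=\{\mathbf{c}:(\mathbf{y}_k)_{\mathbf{c}}\neq 0\}$ and $\alpha^{(k)}_{\mathbf{c}}=\sqrt{(\mathbf{y}_k)_{\mathbf{c}}}$. The first-row equation forces $\|\mathbf{y}_k\|_1=v_1=1$, so $|k_L\rangle$ is normalized, and repeating the computation of Lemma~\ref{Lem:AxZer0} blockwise shows that $\langle k_L|P|k_L\rangle$ is the same fixed linear functional of $\mathbf{v}$ that computes $\langle (k-1)_L|P|(k-1)_L\rangle$, so $\langle k_L|P|k_L\rangle=\langle (k-1)_L|P|(k-1)_L\rangle=\cdots=\langle 0_L|P|0_L\rangle$ for all diagonal $P$ with $\wt_Z(P)\le d_Z-1$; this is \eqref{eq:GeneralizedDeformation} extended to index $k$. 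Since the new columns are disjoint from those used before, $C_k$ is disjoint from $C_0,\dots,C_{k-1}$, whence $\langle i_L|P|j_L\rangle=0$ for $i\neq j$, and $\langle i_L|P|i_L\rangle=0$ for off-diagonal $P$ with $\wt_X(P)\le d_X-1$, both following from $\text{dist}(C)\ge d_X$ exactly as in the $M=2$ case. Thus every successful step extends the code by one logical state and preserves all KL conditions, establishing \eqref{eq:KL_generalized} for the rank-$M$ projector $\Pi$.

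Running the recursion to exhaustion, the number of states $M$ satisfies $2\le M\le \lfloor m/(2V_q(d_Z-1))\rfloor$: at least $2$ by the hypothesis (enough columns for the first block), and at most one new state per fresh block of $2V_q(d_Z-1)$ columns. For a classical code meeting the Gilbert--Varshamov-type bound $m\ge q^n/V_q(d_Z-1)$, \eqref{eq:number_of_blocks} gives $M\approx q^{n(1-2\,\mathrm{Ent}_q(d_Z/n))}$, i.e.\ a constant rate. For the overall distance, take any $P\in\mathcal P_n$ of weight $w\le \min(d_X,d_Z)-1$: then $\wt_X(P)\le d_X-1$ and $\wt_Z(P)\le d_Z-1$ hold simultaneously, so $P$ is either off-diagonal with small $X$-weight or diagonal with $Z$-weight $\le d_Z-1$, and in either case $\langle i_L|P|j_L\rangle=c_P\,\delta_{ij}$ by the two mechanisms above; hence the code corrects all such errors and its distance is $\min(d_X,d_Z)$.

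The crux — and the only place where the argument is conditional — is the feasibility of the inhomogeneous step $A^{(k)}\mathbf{y}_k=\mathbf{v}$: it is underdetermined by a single degree of freedom, so its one-parameter solution line may miss the nonnegative orthant, which is precisely the ``row of constant sign'' failure flagged in the overview. I would therefore phrase the count as: each \emph{successful} call of Alg.~\ref{alg:quDit} adds a state while preserving the invariants, so the process halts with some $M$ in the stated range, attaining the upper bound whenever no bad row is encountered. Making this unconditional would require controlling the sign patterns of the sums of roots of unity appearing in \eqref{eq:Aprime}, which is where the almost-sure statement over random codes (and, in the worst case, the passage to approximate codes) enters.
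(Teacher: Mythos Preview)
Your proposal is correct and follows essentially the same approach as the paper, which does not give a standalone proof but rather declares that ``the results of this section prove'' the theorem. Your explicit invariant $A^{(0)}\mathbf{y}_0=\cdots=A^{(k-1)}\mathbf{y}_{k-1}=\mathbf{v}$ is a clean packaging of what the paper describes procedurally via the chain $[A_1A_2]\binom{\mathbf{x}_1^+}{-\mathbf{x}_1^-}=0$, $[A_2A_3]\binom{-\mathbf{x}_1^-}{\mathbf{x}_2}=0$, etc., and your acknowledgment that feasibility of the nonnegative step is conditional matches the paper's own caveat that Alg.~\ref{alg:quDit} can halt when a same-sign row appears.
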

When the $A$ matrix is wide (under-constrained) Alg.~\ref{alg:quBit} always constructs two logical codewords (a logical qubit). The third and subsequent logical codewords are found recursively by calling Alg.~\ref{alg:quDit} multiple times.
Alg.~\ref{alg:quDit} succeeds with high probability and almost surely over random codes. In the rare case that the set of linear constraints have rows that are all non-zero and are of the same sign, the recursion becomes infeasible and algorithm halts. As shown in~\cite{dines1926positive} this is the only way that the algorithm can fail. 
We define a random classical code as one whose codewords have entries over $\Sigma=\{0,1,2,\dots,q-1\}$ such that each entry is independently and randomly drawn from the uniform distribution over $\Sigma$. 

\begin{lem}
$M$ quantum logical states can be constructed as long as in each step of the recursion, no row other than the first has entries that are all non-zero and with the same sign.  When  
$C$ is a random classical code, then with high probability, Alg.~\ref{alg:quDit} succeeds in providing a quantum code with constant rate.
\end{lem}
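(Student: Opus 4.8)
The plan is to separate the deterministic conditional from the probabilistic claim.

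For the deterministic part, fix a non-adaptive rule for choosing columns, so that the $t$-th call to Alg.~\ref{alg:quDit} works with a predetermined block $G_t$ of $N:=2V_q(d_Z-1)$ codewords (a fixed ordering of $C$, cut into consecutive blocks). That call is then exactly the feasibility problem ``find $\mathbf{x}\ge 0$ with $A'\mathbf{x}=\mathbf{b}$,'' where $A'$ is $A$ restricted to $G_t$ and $\mathbf{b}$ is the vector handed down by the previous step, its first coordinate being the common normalisation mass forced by the all-ones first row of $A$. Cast as the augmented homogeneous system of Section~5 of~\cite{dines1926positive}, Dines' constructive criterion returns such an $\mathbf{x}$ unless the block contains a row all of one sign; since the first row is all ones and is excluded, this means a diagonal-Pauli row $\mathbf{z}$, $1\le\wt(\mathbf{z})\le d_Z-1$, whose cosine (resp.\ sine) entries over $G_t$ are all nonzero and of one sign. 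So if no step meets such a row, all $M$ logical states get built. I would then verify the output is a genuine code: the blocks, plus the sign-split of the solution on $G_1$, give pairwise-disjoint supports $C_0,\dots,C_{M-1}\subseteq C$, so $\mathrm{dist}(C)\ge d_X$ gives the orthogonality part of \eqref{eq:KL_generalized} and kills $\langle i_L|P|i_L\rangle$ for $1\le\wt_X(P)\le d_X-1$; and the chained identities $A[G_t]\,\mathbf{x}_t=A[C_{t-1}]\,\mathbf{x}_{t-1}$ telescope — applying Lemma~\ref{Lem:AxZer0} block by block — so that $\langle i_L|P|i_L\rangle$ is the same for all $i$ and all diagonal $P$ of weight $\le d_Z-1$, which is \eqref{eq:GeneralizedDeformation}. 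Hence the KL conditions hold: an $M$-dimensional code with bit- and phase-flip distances $d_X$ and $d_Z$.

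For the probabilistic part I would bound the chance that, over a random $C$, the recursion ever meets a bad row. The point of fixing the column order is that the blocks $G_1,\dots,G_M$ are predetermined index-sets, and a step-$t$ failure forces some non-first row $\mathbf{z}$ of $A[G_t]$ to be entrywise of one strict sign — an event depending only on the $N$ (independent, uniform) codewords in $G_t$, not on $\mathbf{b}$. For such a codeword $\mathbf{c}$, since $\mathbf{z}\ne 0$ the residue $\mathbf{z}^T\mathbf{c}\bmod q$ is equidistributed over a nontrivial subgroup of $\mathbb{Z}_q$ (all of $\mathbb{Z}_q$ when $q$ is prime), and an elementary check shows that on any nontrivial subgroup both $\cos(2\pi(\cdot)/q)>0$ and $<0$ occur with probability at least the absolute constant $\gamma:=\tfrac14$, and likewise for the sine, except on the two-element subgroup $\{0,q/2\}$ where the sine is identically zero — a vacuous constraint throughout the recursion, hence harmless. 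Thus $\Pr[A[G_t]\text{-row }\mathbf{z}\text{ entrywise one-signed}]\le 2(1-\gamma)^N$, and union-bounding over the $<2V_q(d_Z-1)$ rows, the two row types, and the $M\le q^{\,n(1-2\ent_q(d_Z/n))}\le q^n$ steps gives total failure probability at most $4\,V_q(d_Z-1)\,q^n\,(1-\gamma)^{2V_q(d_Z-1)}$.

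It remains to make this $o(1)$ and to pin down the rate. Take $d_X=\delta n$ and $d_Z=\tau n$ with $\delta,\tau$ small positive constants; then $N=2V_q(d_Z-1)=q^{\,n\ent_q(\tau)+o(n)}$ is exponentially large, so $(1-\gamma)^N$ decays doubly exponentially in $n$ and swamps the $q^n$ from the union bound, making the failure probability $o(1)$. On the complementary event the deterministic part builds all $M$ logical states. Choosing $|C|=q^{\,n(1-\ent_q(\delta))+o(n)}$ random codewords gives $\mathrm{dist}(C)\ge d_X$ with high probability by Gilbert--Varshamov, and as long as $\ent_q(\delta)+\ent_q(\tau)<1$ we have $|C|\ge 2V_q(d_Z-1)$, so the recursion runs $M\ge |C|/N=q^{\,n(1-\ent_q(\delta)-\ent_q(\tau))+o(n)}$ times; hence with probability $1-o(1)$ we obtain a quantum code of rate $\tfrac1n\log_q M\ge 1-\ent_q(\delta)-\ent_q(\tau)+o(1)$ — a positive constant — with distance $\min(d_X,d_Z)=\Theta(n)$. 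The step I expect to be the main obstacle is the probabilistic one: getting the ``$\cos$ and $\sin$ change sign with probability bounded below by an absolute constant'' claim to hold uniformly over all admissible rows $\mathbf{z}$ and all alphabet sizes $q$ (the composite-$q$ subgroup bookkeeping and the degenerate sine rows), and then making the per-row exponential saving $(1-\gamma)^N$ genuinely beat the union bound over the exponentially many recursion steps — which is exactly what forces $V_q(d_Z-1)=\omega(n)$, i.e.\ a growing $d_Z$, the linear-distance/constant-rate regime we want anyway.
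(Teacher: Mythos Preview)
Your deterministic Part~1 is essentially the paper's argument and is fine. The gap is in Part~2. You bound only the probability that the \emph{initial} matrix $A[G_t]$ at each outer step $t$ has a sign-monochromatic row, and you assert this is the sole obstruction (``Dines' constructive criterion returns such an $\mathbf x$ unless the block contains a row all of one sign''). But Dines' procedure in~\cite{dines1926positive} is itself a recursive elimination: it replaces the matrix $(a_{ij})$ by a derived matrix with entries $a_{r,ij}=a_{1i}a_{rj}-a_{1j}a_{ri}$ and then recurses, and the infeasibility certificate (a monochromatic row) can surface at \emph{any} depth of this inner recursion, not only at the top. Mixed signs in every row of the initial $A[G_t]$ do not by themselves certify that $A[G_t]\mathbf x=\mathbf b$, $\mathbf x\ge 0$, is feasible --- by Farkas this cannot be read off from row-signs of $A[G_t]$ alone, since feasibility genuinely depends on $\mathbf b$. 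The paper's proof confronts this directly: it runs an induction over the inner Dines steps, showing that when the columns come from i.i.d.\ uniform random codewords the derived entries $a_{r,ij}$ are again symmetric, mean-zero, and independent across the new column index $ij$, so the ``no monochromatic row'' event continues to hold with overwhelming probability at every depth of the elimination. That induction is the main technical content of the paper's argument, and it is exactly the piece your proposal omits.

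Your observation that a bad \emph{initial} row at step $t$ depends only on the codewords in $G_t$ and not on the inherited $\mathbf b$ is correct and clean, and your sign-probability analysis via subgroups of $\mathbb Z_q$ is arguably tidier than the paper's treatment of the base case. But the obstacle you flagged (a uniform lower bound $\gamma$ on sign changes for composite $q$) is not the real difficulty; the real difficulty is that you have bounded the probability of the wrong event. To repair the argument along your lines you would need either a direct feasibility certificate for $A[G_t]\mathbf x=\mathbf b$, $\mathbf x\ge 0$, that depends only on the top-level augmented matrix (a genuine strengthening of what Dines gives), or to follow the paper and propagate the probabilistic sign-balance through all levels of Dines' elimination.
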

\begin{proof}
The first part of the Lemma follows from the proof of Dines~\cite{dines1926positive} applied to every step of the recursion. To prove the second part, first recall that at each step we are solving a linear system with $\rho\equiv 2V_q(d_Z-1)$ columns and $\rho-1$ rows.
To find positive solutions of this linear system, we employ Dines algorithm which is itself recursive. Hence we will prove that at every step of Dines recursive algorithm, it fails with low probability. We prove this by induction, first starting with the base case.

Recall that $\omega\equiv \exp(2\pi i/q)$. For every diagonal $P$ of weight at least one, $\<c|P|c\>$ is a random variable that takes the values
$\{1,\omega, \dots \omega^{q-1}\}$ with a uniform probability. From Eq.~\eqref{eq:cosine-sine} we know that each $\<c|P|c\>$ is equal to a $\omega^j$ for some $j\in\{0,1,\dots,q-1\}$. That is, ${\rm Re}(\<c|P|c\>)$ and ${\rm Im}(\<c|P|c\>)$ are random variables that take values in $[-1,1]$. Since the code words are random and uniformly distributed over the symbols, by symmetry the probability of an entry having a positive (or negative) sign is a half when $q$ is even and is at most $2/3$ when $q$ is odd.
Moreover the entries are independent. Hence
the probability that a given row has all the same sign is $(3/2)^{-\rho+1}$. And by a union bound, the probability that any of the $\rho-1$ rows have entries whose all entries have the same sign is $O(\rho\,(3/2)^{-\rho})$.

Now we prove the induction step. We take $q$ to be even for now, which  ensure that $a_{ij}$ are symmetric random variables with mean zero. Note that Dines algorithm takes a matrix with matrix elements $a_{ij}$, and constructs a new matrix $a_{r,ij} = a_{1i}a_{rj} - a_{1j}a_{ri}$.
In the new matrix, the indices $i$ and $j$ belong to disjoint sets $I$ and $J$. The number of columns in the new matrix is $|I||J|$. For the induction hypothesis, we assume that the matrix elements $a_{ij}$ are identical and symmetric random variables with zero mean, which are furthermore independent with respect to the column index $j$. We will show that $a_{r,ij}$ will then also be a symmetric random variable with zero mean and also furthermore be independent with respect to the new column indices $ij$.

We now show that the $a_{r,ij}$ has zero mean. For that we see $\mathbb E(a_{r,ij}) 
= \mathbb E(a_{1i}a_{rj}) - \mathbb E(a_{1j}a_{ri})$ because the expectation is linear. Next the independence of i and j imply that $\mathbb E(a_{1i}a_{rj}) = \mathbb E(a_{1i})\mathbb E(a_{rj})$ and $\mathbb E(a_{1j}a_{ri})=\mathbb E(a_{1j})\mathbb E(a_{ri})$. Substituting this shows that 
$\mathbb E(a_{r,ij}) = 0$ because $a_{ij}$ are independent random variables with mean zero.

We now note that the random variables $a_{r,ij}$ are independent with respect to the column labels $ij$. This follows readily from the independence of $a_{ij}$ with respect to $j$. For instance, treat $i$ to be fixed and consider $a_{r,ij}$ and $a_{r,ij'}$ for $j\ne j'$.

We next show that the random variables $a_{r,ij}$ are symmetric. For this, we use the fact that the product of non-degenerate symmetric random variables is a symmetric random variable \cite{hamedani1985product}.

In the base case, we have shown that $a_{ij}$ satisfies the induction hypothesis with high probability.
We have just shown that $a_{r,ij}$ is symmetric, independent, and has zero mean with respect to $ij$. This proves that its entries have equal probability of being positive or negative.

It remains to bound the probabilities of failure of our algorithm under our recursion and Dines recursion. 
With high probability, the product $|I||J|$ is going to be greater than the number of columns in $a_{ij}$. As shrinking the number of columns will only be due to at most one entry of a different sign, this happens with very low probability. The probability of this happening is at most $2 \rho (1/2)^{\rho-1}$ for large $\rho$. The probability of a single row with all the same sign is at most $2 (1/2)^{\rho}$. Hence the total probability of a given row being pathological is at most $2(1/2)^{\rho} + 2\rho (1/2)^{\rho-1} = O(\rho (1/2)^\rho)$.
The probability of a matrix at any step of Dines recursion to have a pathological row is therefore at most $O(\rho^2 (1/2)^\rho))$. Since our algorithm has $O(m/\rho)$ steps, our algorithm's failure probability is at most 
$O(m\rho (1/2)^\rho)$. The number of codewords in our $q$-random code is $m \le q^n$, and therefore $m 2^{-\rho}$ is at most $(q^n 2^{-2^{c n}})$ for some constant $c \in [0,1]$, our algorithm will succeed with overwhelming probability. Although this proof is specialized for the case where $q$ is even, a similar argument will work for $q$ odd.
\end{proof}

In the next section, 
we find that even if this happens we can always construct {\it Approximate} Quantum Error Correcting Codes (AQECC)~\cite{LNCY97} with linear distance and constant rate, provided that, the underlying classical code also has a linear distance and a sufficiently large rate.

\subsection{AQECCs with designed rates}

In the unlikely case, where  building logical quantum states using Alg.~\ref{alg:quDit}  of the previous section fails, we can always build an AQECC as we now show. We introduce an algorithm that produces a quantum code with $M$ logical codewords satisfying the KL criteria approximately. Here $M$ can be strictly larger than 2 at the expense of an approximation error, which is equal to the infidelity of quantum code.

Suppose $C_1,\dots, C_{M/2}$ are  disjoint subsets of the classical code $C$ whose minimum distance is $d_X$. Hence each $C_j$, $j\in[M/2]$ inherits the distance $d_X$, where we take $M$ to be even for simplicity.
Suppose that for every $j\in[M/2]$, it holds that
\begin{align}
    |C_j| \ge 2 V_q(d_Z-1)\;.
\end{align}
For each classical code $C_j$, we use Alg.~\ref{alg:quBit} to construct a corresponding matrix $A_j$ from which we derive logical codewords $|(2j)_L\> , |(2j-1)_L\>$ that satisfy the KL criteria for quantum codes with a bit-flip and phase-flip distance of $d_X$ and $d_Z$ respectively.

It is clear from our construction that for every diagonal Pauli $P$ of weight at most $d_Z-1$, 
\begin{align}
    \< (2j)_L   |\,P\, | (2j)_L\> 
    = \< (2j-1)_L |\,P\,| (2j-1)_L\> = \gamma_{j,P} \label{eq:balanced-sandwiches}
\end{align}
where $\gamma_{j,P}$ is a complex number of norm at most one.  
Now for each $j\in[M/2]$, let ${ \Gamma_{j}} = (\gamma_{j,P})$ be a row vector of length $V_q(d_Z-1)$ with components that correspond to diagonal Paulis of weight at most $d_Z-1$.
Then it follows that each $\Gamma_{j}$ lies in a $|V_q(d_Z-1)|$-dimensional complex unit ball corresponding to a hyper-cube of length two centered at the origin with respect to the infinity norm.  
 
Suppose $ \max_{j,k \in [M/2]} \| \Gamma_j  -  \Gamma_k \|_\infty =  \delta$. Then because of Eq.~\eqref{eq:balanced-sandwiches} we have 
\begin{align}
   \max_{j,k=1,\dots,M} \max_P \left| \<j_L|\,P\,|j_L\> - \<k_L|\,P\,|k_L\> \right|= \delta. \label{lem4eq}
\end{align} 
It remains to find a suitable upper bound for $\delta$. 
To relate the error $\delta$ in satisfying the non-deformation to the size of the code $|C|$ and the number of logical qubits we can construct, we rely on the following fact.
\begin{fact*}
Consider the complex hyper-cube of side length two in $N$ dimensions. Let $x$ be the number of points distributed randomly inside it. Then there exists a ball of radius $\delta$ in the infinity norm that contains at least $x (\delta/2)^N$ points in expectation.
\end{fact*}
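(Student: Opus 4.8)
The plan is to use a straightforward averaging (expectation-of-a-count) argument rather than any deterministic pigeonhole, since the points are placed randomly. Partition the side-length-two hypercube $[-1,1]^N$ into $(2/\delta)^N$ congruent sub-cubes, each of side length $\delta$. Each sub-cube is contained in a ball of radius $\delta$ (indeed radius $\delta\sqrt{N}/2$) in the infinity norm --- actually an $\ell_\infty$ ball of radius $\delta/2$ already contains a sub-cube, so we take $\delta$ generously; I will phrase the statement so that "a ball of radius $\delta$ in the infinity norm" literally means an axis-aligned cube of side $2\delta \ge \delta$, which certainly contains any one of our $\delta$-sub-cubes. For each sub-cube $S$, let $N_S$ be the (random) number of the $x$ points that land in $S$. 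By linearity of expectation and symmetry, $\mathbb E[N_S] = x \cdot \mathrm{vol}(S)/\mathrm{vol}([-1,1]^N) = x (\delta/2)^N$, assuming the points are i.i.d.\ uniform (this is the only probabilistic input, and it is exactly the hypothesis "distributed randomly inside it").

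Next I would extract the existential statement. Since $\sum_S \mathbb E[N_S] = x$ and there are $(2/\delta)^N$ sub-cubes, the average over sub-cubes of $\mathbb E[N_S]$ is $x(\delta/2)^N$; hence $\max_S \mathbb E[N_S] \ge x(\delta/2)^N$, so there is a fixed sub-cube $S^\star$ with $\mathbb E[N_{S^\star}] \ge x (\delta/2)^N$. That sub-cube sits inside an infinity-norm ball of radius $\delta$, and in expectation it contains at least $x(\delta/2)^N$ of the points, which is exactly the claim. (If one prefers a deterministic conclusion, the same count gives: in \emph{any} realization, some sub-cube contains at least $x(\delta/2)^N$ points, again by ordinary pigeonhole on the total; I would state both versions since the downstream use only needs one of them.)

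I do not expect a genuine obstacle here --- the statement is a volume/pigeonhole estimate and the proof is two lines once the partition is set up. The only point requiring a little care is the bookkeeping of what "ball of radius $\delta$ in the infinity norm" is taken to mean versus the side length of the partition cubes: one must make sure the geometric containment (sub-cube $\subseteq$ $\ell_\infty$-ball) is valid with the stated radius, which it is as soon as the ball's $\ell_\infty$-radius is at least half the sub-cube's side, i.e.\ at least $\delta/2$; taking radius $\delta$ is more than enough. A secondary subtlety, relevant when this Fact is applied to the vectors $\Gamma_j$, is that those vectors are not literally uniformly random in the hypercube but arise from random classical codewords; there one invokes the Fact in the heuristic/expected sense, matching the paper's phrasing "in expectation," and the earlier-established bound $\|\Gamma_j\|_\infty \le 1$ guarantees they do lie in $[-1,1]^N$ with $N = V_q(d_Z-1)$.
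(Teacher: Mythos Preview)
The paper does not supply a proof of this Fact; it is stated without argument and then immediately applied. Your pigeonhole/averaging argument is the standard and correct justification: partition the side-length-two cube into $(2/\delta)^N$ congruent sub-cubes of side $\delta$, observe that the expected (or actual) count in some sub-cube is at least the average $x(\delta/2)^N$, and note that each sub-cube sits inside an $\ell_\infty$-ball of radius $\delta/2\le\delta$. One small point you glossed over: the paper speaks of a \emph{complex} hypercube, since the $\gamma_{j,P}$ are complex numbers of modulus at most one; if one reads this literally as $2N$ real coordinates the exponent would double, so the stated bound $(\delta/2)^N$ is being used loosely (or with the $\ell_\infty$ norm taken over complex coordinates). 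This does not affect the structure of your argument, only the bookkeeping, and the paper itself is informal on this point.
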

The number of points inside the unit hyper-cube is $\smash{x=\lfloor |C|/(2V_q( d_Z-1))\rfloor}$. The radius of the ball is $\delta$. The dimension of the hyper-cube is $V_q(d_Z-1)$.
Hence from the above fact, we have that the expected number of logical codeword pairs is $\mathbb{E}M=2x (\delta/2)^N$ which writes
\begin{align}\label{eq:ExpectedLogical}
    \mathbb{E}M \ge 2\left\lfloor \frac{|C|}{2V_q( d_Z-1)}\right\rfloor 
    \left(\frac \delta  2 \right)^{V_q(d_Z-1)}.
\end{align}
 
The infidelity $\epsilon$ defined by one minus the worst case entanglement fidelity of the quantum code can be upper bounded as shown  
in~\cite{ouyang2014permutation}, to be
\begin{align}
    \epsilon \le 
    O\left( 
     \delta\; V_q^4(d_Z-1)
    \right),
\end{align}
when the noisy quantum channel introduces bit-flip and phase-flip weights at most $d_X-1$ and $d_Z-1$ uniformly at random.

\subsection{Illustrations}
In this section we illustrate our framework through a series of examples. It is noteworthy that the first example is a one-to-one correspondence between the celebrated classical and quantum results of Hamming's and Steane's respectively.

\subsubsection{{\it C} as a [7,4,3] Hamming code gives the Steane code}
\label{sec:steane-code}

In classical coding theory, we use the notation $(n,m,d)$ to denote a binary code with codewords of length $n$ that has $m$ codewords, a distance of $d$. We use $[n,\log_2 m,d]$ to denote a binary code with codewords of length $n$ that has $m$ codewords, a distance of $d$, and is furthermore a linear code.
Consider the case when $C$ is generated from the codewords 
1000110,    
0100101,
0010011, and 
0001111.
This classical code is the celebrated [7,4,3] Hamming code, and has been used previously by Steane to obtain the [[7,1,3]] Steane code.
Applying our framework to $C$, we get the unique solution
\begin{align}
    |0_L\>
    =&
    \frac{1}{\sqrt{8}}\left(
    |0,0,0,0,0,0,0\>
+|0,0,0,1,1,1,1\> \right.\notag\\&\left.\quad 
+|0,1,1,0,1,1,0\>
+|0,1,1,1,0,0,1\>\right.\notag\\
&\left.+|1,0,1,0,1,0,1\>
+|1,0,1,1,0,1,0\>
\right.\notag\\&\left.\quad 
+|1,1,0,0,0,1,1\>
+|1,1,0,1,1,0,0\>\right)\\
|1_L\> = &
\frac{1}{\sqrt{8}}
\left(|0,0,1,0,0,1,1\>
+|0,0,1,1,1,0,0\>
\right.\notag\\&\left.\quad 
+|0,1,0,0,1,0,1\>
+|0,1,0,1,0,1,0\>\right.\notag\\
&\left.+|1,0,0,0,1,1,0\>
+|1,0,0,1,0,0,1\>
\right.\notag\\&\left.\quad 
+|1,1,1,0,0,0,0\>
+|1,1,1,1,1,1,1\>\right).
\end{align}
This is in fact equivalent to the Steane code.

\subsubsection{{\it C} as a nonlinear cyclic code}
\label{sec:nonlinear-cyclic}
We now give an example of a quantum code constructed from a non-linear classical code.  Take the nonlinear $(4,8,2)$ code with codewords
that are cyclic permutations of 
$0001$ and $1110$.
The corresponding kernel of the matrix $A$ has dimension 3, and one solution to this gives an error detecting quantum code with logical codewords
\begin{align}
    |0_L\> &= \frac{1}{\sqrt 2}(
    |0,0,0,1\>
    +|1,1,1,0\>)\\
    |1_L\> &= \frac{1}{\sqrt 2}(
    |0,0,1,0\>
    +|1,1,0,1\>).
\end{align}
In fact, we can have the additional logical codewords
\begin{align}
    |2_L\> &= \frac{1}{\sqrt 2}(
    |0,1,0,0\>
    +|1,0,1,1\>)\\
    |3_L\> &= \frac{1}{\sqrt 2}(
    |1,0,0,0\>
    +|0,1,1,1\>).
\end{align}
This gives a quantum code of dimension 4 and a minimum distance of 2.
This quantum code is also an example of a CWS code. This is because for every $j=0,1,2,3$, there is a Pauli operator that takes 
the stabilizer state 
$\frac{1}{\sqrt 2}(
    |0,0,0,0\>
    +|1,1,1,1\>)$ to
$|j_L\>$.
Since this quantum code is CWS, quantum error correction can proceed using formalism developed for CWS codes~\cite{CSSZ08}. 
This quantum code also has some other attractive properties. First, by inducing cyclic shifts in the underlying qubits, we can move from one logical codeword to another. 
Second, this quantum code is stabilized by $X^{\otimes 4}$ and is affected uniformly by $Z^{\otimes 4}$. Together, this implies that the quantum code is invariant under transversal $X$ and $Y$ and $Z$ operations.

\subsubsection{Permutation-invariant quantum codes}\label{subsec:permInva}
Our quantum code construction formalism can also extend to quantum codes with logical codewords that are supported on non-product basis states. 
One example of such codes are permutation-invariant quantum codes, which are invariant under any permutation of the underlying particles. Permutation-invariant quantum codes have been explicitly constructed using a variety of different techniques~\cite{Rus00,PoR04,ouyang2014permutation,ouyang2015permutation,OUYANG201743,ouyang2019permutation}.
Recently, permutation-invariant quantum codes have been considered for applications such as for quantum storage~\cite{ouyang2019mems}, or for robust quantum meterology~\cite{ouyang2019robust}, and they can also be prepared in physically realistic scenarios~\cite{init-picode-2019-PRA}.

When permutation-invariant quantum codes are constructed on $n$ qubits, they must be superpositions over Dicke states
\begin{align}
|D^n_w\> = \frac{1}{\sqrt{\binom n w}} \sum_{\substack{ x_1,\dots, x_N \in \{0,1\} \\ x_1+\dots + x_n = w }} |x_1\>\otimes \dots \otimes |x_n\>.\label{eq:dicke-defi}
\end{align}
Here $w$ is the weight of the Dicke state, and counts the Hamming weights of its constituent computation basis states' labels.
The Dicke states for qubit states are labeled by only their weights, of which there are only $n+1$ possibilities. 
For our quantum code construction, we can choose the logical states to be supported on $|D^n_{w_1}\> ,\dots ,|D^n_{w_m}\>$ where 
$w_{j+1}-w_j \ge d$ for any $j=1,\dots, m-1$,
and $d$ is the desired minimum distance of the quantum code.

When a quantum code is permutation-invariant, we only need to consider equivalence classes of Pauli operators up to a permutation. 
For Dicke states, 
$\<D^n_w|P|D^n_w\>$ are not necessarily zero even when the Pauli $P$ is non-diagonal~\cite{ouyang2019robust}, we need to count the number of all Paulis of weight at most $d-1$ up to a permutation.
The number of unique qubit-Paulis up to a permutation having a weight of at most $w$ is equal to the number of ways to order a $w$-tuple in $\{1,2,3\}^w$ in a non-decreasing sequence, and this number is just $\binom {n+w-1}{w}$. 
Hence to total number of Paulis that we need to consider for the non-deformation conditions is  at most
$\sum_{w=0}^{d-1}3^w.$

Now we consider a variation of the $A$-matrix from Alg.~\ref{alg:quBit} with the matrix elements
\begin{align}
    \<D^n_{w_j}  | P | D^n_{w_j} \>,
\end{align}
where $P$ labels the rows and $j$ labels the columns. 
 
From this, we can get permutation-invariant quantum codes with a minimum distance of $d$ whenever
\begin{align}
    (\lfloor n / d \rfloor +1) \ge 1+\sum_{w=0}^{d-1}3^w .
\end{align}
For instance, when $d=3$ this inequality becomes
\begin{align}
    \lfloor n/3 \rfloor \ge  (1+13),
\end{align}
and this formalism show that we can get gives permutation-invariant codes with a distance of $d=3$ when $n\ge 42$. This bound is however loose, because there are permutation-invariant quantum codes with $d=3$ on 9 qubits~\cite{Rus00,ouyang2014permutation}, and even on 7 qubits~\cite{PoR04}.
This suggests that rather than using loose bounds on the nullity of the $A$ matrix in Alg.~\ref{alg:quBit}, we need to exploit additional structure about the kernel of $A$ to realize the full potential of our formalism.
 
\subsubsection{Remarks on optimality}

If we take the metric of optimality to be maximization of the distance $d$ for fixed length $n$ and number of encoded qubits $k$, then our construction is not optimum. This is because the five-qubit code is the unique [[5,1,3]] code, and our framework does not encompass it. To see this, note that the classical codewords on which the [[5,1,3]] code is supported on comprise of a distance-1 classical code.

We demonstrate examples of optimal quantum codes using our framework which use the Hamming code and a nonlinear cyclic code in Sec.~\ref{sec:steane-code} and Sec.~\ref{sec:nonlinear-cyclic} respectively.  

\subsection{Recovery procedure}
If a quantum code satisfies the KL criteria \cite{KnL97} exactly, we can use the KL procedure  (Ref.~\cite{KnL97}) to construct explicit recovery maps for the quantum code. We briefly review this recovery procedure. 

Given an $M$-dimensional quantum code, the correctible subspaces reside within a space spanned by
$W_{1}|k_{L}\rangle
, \dots ,
W_{w}|k_{L}\rangle
$, where $k\in \{0,1,\dots, M-1\}$ and $W_j$ are matrices that we call correctible errors. 
Using any Gram-Schmidt procedure on 
$
W_{1}|k_{L}\rangle
, \dots ,
W_{w}|k_{L}\rangle
$
for every $k\in \{0,1,\dots, M-1\}$,
we obtain a set of orthonormal states
\[
|F_{k,1}\rangle,\dots,|F_{k,r}\rangle.
\]
Here $r$ is the dimension of the space spanned by $W_{1}|k_{L}\rangle,\dots,W_{w}|k_{L}\rangle$, and is identical for every $k$.
For any such Gram-Schmidt decomposition, and for any $j\in[r]$, we can form the projectors
\begin{align}
    P_{j} = \sum_{k=0}^{M-1} |F_{k,j}\>\<F_{k,j}|.
\end{align}
From the KL criteria, these projectors are pair-wise orthogonal, and we can derive the corresponding unitary operators $U_j$ so that the recovery operation with Kraus operators $U_j P_j$ will correct all errors spanned by the elements of the correctible set $\Omega$.
These Kraus operators correspond
to the operational procedure of first performing projective measurements given by the projectors $\{P_1,\dots, P_r\}$, obtaining an error syndrome $j$ that corresponds to the projector $P_j$, and applying a conditional unitary operation $U_j$ to complete the recovery procedure.

This KL recovery procedure does not guarantee a priori efficient implementations of these recovery maps. Indeed, the complexity of these projective measurements and conditional unitary operations when decomposed into individual experimentally accessible operations or quantum-gate sequences remains to be elucidated. We believe that a constructing a systematic and efficient decoding scheme that is compatible with our quantum coding framework is an important milestone to be achieved in bringing our quantum codes closer toward implementation.

\section{Part 2: Linear distance codes in ground space of local Hamiltonians}

\subsection{Local Hamiltonian and its ground space}

Let us consider a spin chain of length
$n$ with open boundary conditions and the local Hilbert space dimension of $2s+1$, where $s\ge1$
is a positive integer. We take a representation in which $|j\rangle$
denotes the $s_{z}=j$ state of a spin-$s$ particle:
\[
\hat{S}_{z}|j\rangle=j\;|j\rangle,\qquad j\in\Sigma\quad.
\]

The local Hamiltonian whose ground space will be shown to contain a nontrivial quantum
error correcting code is
\begin{equation}
H_{n}=H_{n}^{J}+H_{n}^{s}\label{eq:H}
\end{equation}
where $H_{n}^{J}=J\sum_{k=1}^{n}\left(|0\rangle\langle0|\right)_{k}$.
Recall that the Hamiltonian $H_{n}^{s}$, is defined by
\begin{equation}
H_{n}^{s}=\sum_{k=1}^{n-1}\left\{ \sum_{m=-s}^{s}P_{k,k+1}^{m}+\sum_{m=1}^{s}Q_{k,k+1}^{m}\right\} \;,\label{eq:Hn_s-2}
\end{equation}
and the local terms are projectors acting on two neighboring spins $k,k+1$ defined by 
\begin{align}
P^{m}&=|0\leftrightarrow m\rangle\langle0\leftrightarrow m|
\notag\\
Q^{m}&=|00\leftrightarrow\pm m\rangle\langle00\leftrightarrow\pm m|;\label{eq:hs_Proj}
\end{align}
where
\begin{eqnarray}
|0\leftrightarrow m\rangle & \equiv & \frac{1}{\sqrt{2}}\left[|0,m\rangle-|m,0\rangle\right]\label{eq:Local_move}\\
|00\leftrightarrow\pm m\rangle & \equiv & \frac{1}{\sqrt{2}}\left[|0,0\rangle-|m,-m\rangle\right],\label{eq:local_int}
\end{eqnarray}
and we denoted by $|j,k\rangle$ the spin state $|s_{k}^{z}=j,s_{k+1}^{z}=k\rangle$.
There are $3s$ local projectors as $P_{k,k+1}^{0}$ automatically
vanishes. See Fig.~\ref{Fig:Hamiltonian}.

In the simplest form $s=1$, and we have
\begin{align*}
H_{n}^{1}=&\frac{1}{2}\sum_{k=1}^{n-1}\left\{ |0\leftrightarrow1\rangle\langle0\leftrightarrow1|\:+\:|0\leftrightarrow-1\rangle\langle0\leftrightarrow-1|\:\right.\\
&\left.\quad\quad+\:|00\leftrightarrow\pm1\rangle\langle00\leftrightarrow\pm1|\right\} \;,
\end{align*}
where $|1\leftrightarrow0\rangle\propto|0,1\rangle-|1,0\rangle$,
$|-1\leftrightarrow0\rangle\propto|0,-1\rangle-|-1,0\rangle$, and
$|\pm1\leftrightarrow00\rangle\propto|0,0\rangle-|1,-1\rangle$. Below
we will be mostly interested in $s>1$,
because the case of $s=1$ has a ground space of dimension too small for us to construct non-trivial QEC codes, and the case of $s>1$ offers a ground space of large enough dimension for us to construct non-trivial QEC codes.
\begin{figure*}
    \centering
    \includegraphics[scale=0.50]{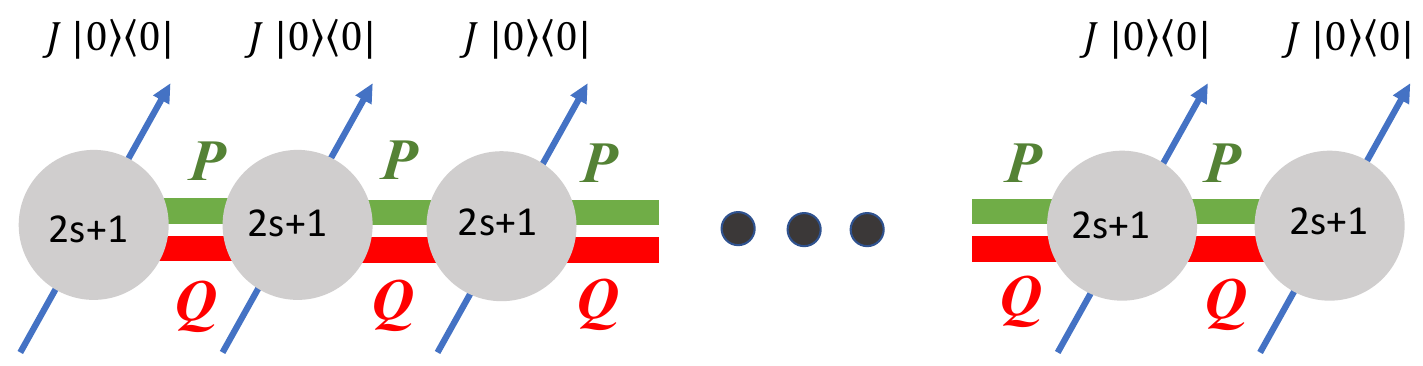}
    \caption{The new local integer spin-$s$ Hamiltonian $H_n$.}
    \label{Fig:Hamiltonian}
\end{figure*}
\begin{lem}
\label{lem:FF_Sum_Lemma} Suppose $H_{1}\ge0$ and $H_{2}\ge0$, and
$H_{1}+H_{2}$ is a frustration free (FF) Hamiltonian with zero energy
ground state. Then the ground space of $H_{1}+H_{2}$ coincides with
the intersection of the ground spaces of $H_{1}$ and $H_{2}$.
\end{lem}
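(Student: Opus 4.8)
The plan is to reduce the statement to the elementary fact that a sum of positive semidefinite operators annihilates a vector if and only if each summand does. First I would note that, since $H_1+H_2$ is frustration free with a zero-energy ground state, its lowest eigenvalue is $0$ and hence its ground space is precisely $\ker(H_1+H_2)$. So the task is to prove the set equality $\ker(H_1+H_2)=\ker(H_1)\cap\ker(H_2)$, and one inclusion is free: if $H_1\psi=H_2\psi=0$ then $(H_1+H_2)\psi=0$.

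For the nontrivial inclusion I would take $\psi\in\ker(H_1+H_2)$ and expand the expectation value:
\[
0=\langle\psi|(H_1+H_2)|\psi\rangle=\langle\psi|H_1|\psi\rangle+\langle\psi|H_2|\psi\rangle .
\]
Because $H_1\ge 0$ and $H_2\ge 0$, both terms on the right are nonnegative, so they must each be zero. Writing $H_i=H_i^{1/2}H_i^{1/2}$ with $H_i^{1/2}\ge 0$ the positive square root, $\langle\psi|H_i|\psi\rangle=\|H_i^{1/2}\psi\|^2=0$ forces $H_i^{1/2}\psi=0$, and therefore $H_i\psi=H_i^{1/2}(H_i^{1/2}\psi)=0$ for $i=1,2$. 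Hence $\psi\in\ker(H_1)\cap\ker(H_2)$, which finishes the argument.

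Honestly there is no real obstacle here; the only points demanding care are (i) explicitly invoking the zero-energy hypothesis so that ``ground space'' and ``kernel'' coincide for $H_1+H_2$, and (ii) using that $H_1$ and $H_2$ are genuinely positive semidefinite, which is what makes the cross-term cancellation work. I would also remark that the same argument applies verbatim to any finite sum $H=\sum_j h_j$ of PSD terms, so that the ground space of a frustration-free Hamiltonian equals the intersection of the kernels of its local terms; this is exactly the form in which the lemma will be used to analyze $\ker(H_n)$ with $H_n=H_n^J+H_n^s$ and $H_n^s$ itself a sum of the local projectors $P^m_{k,k+1}$, $Q^m_{k,k+1}$.
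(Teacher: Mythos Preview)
Your proof is correct and follows essentially the same route as the paper: both establish the trivial inclusion directly and then, for the reverse, expand $\langle\psi|(H_1+H_2)|\psi\rangle=0$ and use positivity of the summands to force each $\langle\psi|H_i|\psi\rangle=0$. You are slightly more explicit than the paper in invoking the positive square root to pass from $\langle\psi|H_i|\psi\rangle=0$ to $H_i|\psi\rangle=0$, which is a welcome clarification.
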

\begin{proof}
Any state in the intersection of the kernels of $H_{1}$ and $H_{2}$
automatically vanishes on $H_{1}+H_{2}$. Conversely, a state $|\psi\rangle$
that is in the kernel of the sum $H_{1}+H_{2}$ obeys 
$\smash{\langle\psi|(H_{1}+H_{2})|\psi\rangle=0}$.
Since each summand is a positive operator, so is their sum, and for $|\psi\rangle$ to be a zero energy ground state of $H_{1}+H_{2}$ it has to vanish on each summand $\langle\psi|H_{i}|\psi\rangle=0$ for $i=1,2$. Therefore
$|\psi\rangle$ is a FF ground state of each $H_{i}$ as well.
\end{proof}
It is clear that
\begin{equation}
\text{spec}(H_{n}^{J})=J\{0,1,2,\dots,n\}\label{eq:SpecH_J}
\end{equation}
whose gap we denote by $\Delta(H_{n}^{J})=J$. The kernel of $H^J_n$ is the
span of all product states $|{\bf t}\rangle$ of weight $n$, where
${\bf t}\in\Sigma_{*}^{n}$; note that the letter $0$ is excluded in these strings.

We will obtain the ground space of $H_{n}$ by taking the intersection
of the ground space of $H_{n}^{J}$ with $H_{n}^{s}$. We proceed
to analytically derive the ground space of $H_{n}^{s}$ after preliminary definitions.
From now we assume that $m$ is a positive integer unless stated otherwise.

Since $H_{n}^{s}$ is free of the sign problem (i.e.,~stoquastic),
the local projectors define an effective Markov chain, which have
the correspondence shown in Table \ref{table:moves}.


We say two strings ${\bf t},{\bf z}$ are equivalent, denoted by ${\bf t}\sim{\bf z}$
if ${\bf z}$ can be reached from ${\bf t}$ by applying a sequence
of the local moves defined in the table. We define a set of equivalence
classes as follows. Let $k$ denote the number of nonzero letters
in a product state (Eq.~\eqref{eq:Equiv_colored}). Using a consecutive
set of the local moves stated above, we can take any state $\dots m0\dots0(-m)\dots\rightarrow\dots0\dots0m(-m)0\dots0\dots\rightarrow\dots0000\dots$
. We then move all the zeros to the rightmost end and ensure that
all strings are of the form
\begin{eqnarray}
c_{x_{1},\dots,x_{k}} & = & x_{1}\dots x_{k}\underbrace{0\dots0}_{n-k}\label{eq:Equiv_colored}
\end{eqnarray}
where $x_{i}\in\Sigma_{*}$. By assumption the string $x_{1}\dots x_{k}$
cannot be further reduced and $n-k$ is the maximum number of zeros.
Then it follows that if $x_{i}=m$ then it must \textit{not} have
to its immediate right an $x_{i+1}=-m$ for otherwise the annihilation
rule $(m,-m)\rightarrow00$ would further reduce it. 
\begin{lem}
Any string ${\bf t}\in\Sigma^{n}$ is equivalent to one and only one
$c_{x_{1},\dots,x_{k}}$ (Eq.~\eqref{eq:Equiv_colored}).
\end{lem}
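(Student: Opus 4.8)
The plan is to prove this in two parts: existence (every string is equivalent to \emph{some} $c_{x_1,\dots,x_k}$) and uniqueness (no two distinct normal forms are equivalent). Existence is essentially a termination argument for a rewriting system. I would define a well-founded ordering on strings so that each of the reducing moves — the annihilation $(m,-m)\to 00$ and the transport moves that push zeros rightward — strictly decreases the ordering. A natural candidate is the lexicographically-combined quantity (number of nonzero letters, then the tuple of positions of the nonzero letters read left-to-right, compared lexicographically). Annihilation strictly decreases the first component; a transport move $0m \leftrightarrow m0$ applied as $0m\to m0$ keeps the number of nonzeros fixed but moves a nonzero letter leftward, hence I should orient things so that I push zeros to the right, i.e. I use $m0\to 0m$? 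That reverses the decrease. The cleaner choice is to measure the total displacement $\sum_i (\text{position of }i\text{-th nonzero letter})$: moving a nonzero letter right increases it, moving it left decreases it, so the "canonicalizing" direction (collect zeros at the right) increases that sum, which is bounded above by $\binom{n}{2}$-type quantities — so I would instead argue by a potential that is bounded and monotone, or simply note the reachable set from any ${\bf t}$ is finite (at most $(2s+1)^n$ strings) and the moves that (a) annihilate or (b) move nonzeros rightward can only be applied finitely often before no further reduction is possible, at which point the string has all its zeros at the right end and no adjacent $m,(-m)$ pair among the $x_i$. That "stuck" configuration is exactly the normal form $c_{x_1,\dots,x_k}$.

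For uniqueness, the standard tool is to exhibit an invariant of the equivalence relation that determines the normal form. I would look for a conserved quantity under both local moves. The transport move $0m\leftrightarrow m0$ preserves the multiset of nonzero letters and their left-to-right order (it only slides a letter past a zero), while the interaction move $00\leftrightarrow m,(-m)$ inserts or deletes an adjacent pair $m,(-m)$ at a location of two zeros. This is precisely the setup of a free-product / Dyck-type reduction: consider the word $w({\bf t})$ obtained by deleting all zeros from ${\bf t}$, a word in the alphabet $\Sigma_* $, and reduce it in the monoid where $m$ and $-m$ are mutual inverses — wait, the relation is $m(-m)\to \varepsilon$ only (not $(-m)m\to\varepsilon$, since the forbidden pattern is specifically $x_i=m$ immediately followed by $x_{i+1}=-m$). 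So I would work in the monoid $\langle \Sigma_* \mid m\cdot(-m)=\varepsilon \ \forall m>0\rangle$ and show: (i) both local moves leave the image of $w({\bf t})$ in this monoid unchanged — transport trivially, and interaction because it inserts/removes exactly a factor $m(-m)$; (ii) each normal form $c_{x_1,\dots,x_k}$ has $w = x_1\cdots x_k$ with no factor $m(-m)$, i.e. it is already reduced in the monoid; (iii) reduced words are unique representatives of their monoid elements (a Newman's-lemma / confluence argument for this one-sided cancellation rule, or a direct argument since the rewriting $m(-m)\to\varepsilon$ is terminating and locally confluent). Combining, if $c_{x_1,\dots,x_k}\sim c_{y_1,\dots,y_l}$ then $x_1\cdots x_k$ and $y_1\cdots y_l$ are equal reduced words, hence identical, and since both have their zeros packed maximally to the right the full strings coincide.

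The main obstacle I anticipate is \textbf{local confluence of the interaction move with the transport move}, which underlies both the "finitely many reductions" claim and the well-definedness of the monoid invariant. Concretely: after deleting zeros, a transport move is invisible, but one must be sure that the \emph{presence} of zeros never blocks an annihilation that the zero-free picture would permit — e.g. a configuration $m\,0\,(-m)$ must be reducible, and indeed the table's derivation $\dots m0\dots0(-m)\dots\to\dots 0\dots 0\,m(-m)\,0\dots\to\dots 0000\dots$ is exactly the lemma's own hint showing transport moves can always bring an $m,(-m)$ pair adjacent before annihilating. So I would first record this as a sub-claim: \emph{if $w({\bf t})$ contains a (not necessarily adjacent, but consecutive-among-nonzeros) factor $m(-m)$, then ${\bf t}$ admits a reduction lowering the nonzero count}, proved by transporting zeros out from between the pair. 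Granting that sub-claim, existence follows (reduce nonzero count while possible, then push zeros right), and for uniqueness the monoid image of $w({\bf t})$ is the desired complete invariant. The remaining checks — that $\binom{n}{2}$-type bounds make the zero-pushing phase terminate, and that the cancellation monoid has unique reduced words — are routine.

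Finally, I would note one subtlety worth flagging in the write-up: the normal form collects zeros to the \emph{right}, so among all equivalents with the minimal nonzero count there is still a choice of where the zeros sit, but the transport moves act transitively on those (they can slide any zero past any nonzero), and "all zeros rightmost" picks a unique one; combined with uniqueness of the reduced nonzero word $x_1\cdots x_k$, this pins down $c_{x_1,\dots,x_k}$ completely, giving the "one and only one" in the statement.
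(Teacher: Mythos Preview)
Your proposal is correct. For existence, you and the paper argue the same way: reduce $m\,0\cdots 0\,(-m)$ patterns until none remain, then slide all zeros to the right; your termination bookkeeping (finiteness of the reachable set, monotone nonzero count) is a bit more explicit than the paper's but substantively identical.

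For uniqueness, your route is genuinely different and stronger. The paper only verifies that $\sim$ is reflexive, symmetric, and transitive, then appeals to the general fact that equivalence classes partition a set. But that fact does not rule out two \emph{distinct} normal forms $c_{x_1,\dots,x_k}$ and $c_{y_1,\dots,y_l}$ lying in the \emph{same} class, which is precisely the content of ``only one''. Your monoid invariant --- the image of the zero-deleted word in $\langle \Sigma_* \mid m\cdot(-m)=\varepsilon,\ m>0\rangle$ --- is preserved by both local moves and separates normal forms, so it actually proves uniqueness. The confluence check you flag as the main obstacle is indeed routine: a redex $m(-m)$ has a positive letter on the left and a negative one on the right, so two redexes can never share a letter, there are no critical pairs, and Newman's lemma applies immediately. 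What your approach buys is a complete argument; what the paper's buys is brevity, at the cost of leaving the key uniqueness step unjustified.
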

\begin{proof}
By applying the local moves in the table above to any string ${\bf t}$,
one can make sure that there are no substrings $m(-m)$ or $m0\cdots0(-m)$
for any $m\in[s]$, 
where $[s]=\{1,2,\dots,s\}$. Now suppose we apply these moves to ${\bf t}$
as much as possible to bring it as close as possible to the state
of all zeros. Then if ${\bf t}$ contains a single $m$, then the
first non-zero letter to its right cannot be $-m$. Similarly if ${\bf t}$
contains at least one $-m$ then the first non-zero letter to its
left must not be $m$. By applying $0(-m)\rightarrow(-m)0$
and $0m\rightarrow m0$ we move all the zeros to the right to obtain
a string of the form given by Eq.~\eqref{eq:Equiv_colored}. To prove
that the set of all strings equivalent to $c_{x_{1},\dots,x_{k}}$
is indeed an equivalence class, we need to prove that the classes
are distinct. It is clear that any string is equivalent to itself
(reflexive). If ${\bf x}\sim{\bf y}$ and ${\bf y}\sim c_{x_{1},\dots,x_{k}}$
then ${\bf x}\sim c_{x_{1},\dots,x_{k}}$ (transitive). Lastly if
${\bf x}\sim{\bf y}$, then ${\bf y}\sim{\bf x}$ because of the reversibility
of the local moves (symmetric). Therefore indeed the set of strings
equivalent to $c_{x_{1},\dots,x_{k}}$ form an equivalence class and
it is an elementary fact that equivalence classes are distinct and
partition the state space (i.e., the set of all strings) into disjoint subsets.
\end{proof}
\begin{lem}
\label{lem:GS_Hn}The uniform superposition of all strings in an equivalence
class (i.e., equivalent to the irreducible string in Eq.~\eqref{eq:Equiv_colored})
 is a (frustration free) zero energy ground state of $H^s_n$.
\end{lem}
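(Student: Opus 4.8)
The plan is to show that the uniform superposition $|\psi_{\mathcal C}\rangle \equiv |\mathcal C|^{-1/2}\sum_{{\bf t}\in\mathcal C}|{\bf t}\rangle$ over an equivalence class $\mathcal C$ lies in the kernel of \emph{every} local projector making up $H_n^s$. Because $H_n^s$ is a sum of positive semidefinite terms, being annihilated by each term at once shows that $|\psi_{\mathcal C}\rangle$ has zero energy (hence is a genuine ground state, since $H_n^s\ge 0$) and that it is frustration free. So the whole task reduces to checking $P^m_{k,k+1}|\psi_{\mathcal C}\rangle=0$ and $Q^m_{k,k+1}|\psi_{\mathcal C}\rangle=0$ for every bond $(k,k+1)$ and every admissible $m$.

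First I would treat $P^m_{k,k+1}$, which is the rank-one projector onto the antisymmetric two-site vector $\tfrac{1}{\sqrt{2}}(|0,m\rangle-|m,0\rangle)$ on sites $k,k+1$. It therefore kills any product state whose letters on that bond avoid the pair $\{(0,m),(m,0)\}$, and it sends the \emph{symmetric} combination $|0,m\rangle+|m,0\rangle$ to zero. I split the sum defining $|\psi_{\mathcal C}\rangle$ into strings that are inert under $P^m_{k,k+1}$ (these contribute nothing) and the ``active'' strings ${\bf t}$ with $(t_k,t_{k+1})\in\{(0,m),(m,0)\}$. Each active ${\bf t}$ has a partner ${\bf t}'$ obtained by swapping $(0,m)\leftrightarrow(m,0)$ on that bond; this swap is exactly the local move attached to $P^m$, so ${\bf t}\sim{\bf t}'$ and hence ${\bf t}'\in\mathcal C$, while ${\bf t}\ne{\bf t}'$ because $m\ne0$. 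Thus the active strings partition into two-element orbits $\{{\bf t},{\bf t}'\}$, both occurring with the common amplitude $|\mathcal C|^{-1/2}$, and a one-line two-site computation gives $P^m_{k,k+1}\bigl(|{\bf t}\rangle+|{\bf t}'\rangle\bigr)=\tfrac{1}{2}(|{\bf t}\rangle-|{\bf t}'\rangle)-\tfrac{1}{2}(|{\bf t}\rangle-|{\bf t}'\rangle)=0$. Summing over orbits gives $P^m_{k,k+1}|\psi_{\mathcal C}\rangle=0$.

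The argument for $Q^m_{k,k+1}=(|00\leftrightarrow\pm m\rangle\langle00\leftrightarrow\pm m|)_{k,k+1}$ is identical, with the pair $\{(0,0),(m,-m)\}$ (for $m\in\{1,\dots,s\}$) replacing $\{(0,m),(m,0)\}$: the active strings carry $(t_k,t_{k+1})\in\{(0,0),(m,-m)\}$ on that bond, their partners under the local move $00\leftrightarrow(m,-m)$ again lie in $\mathcal C$ and are distinct from them since $m\ne0$, and the rank-one projector annihilates the symmetric combination $|0,0\rangle+|m,-m\rangle$. Hence $Q^m_{k,k+1}|\psi_{\mathcal C}\rangle=0$, so $H_n^s|\psi_{\mathcal C}\rangle=0$ and $|\psi_{\mathcal C}\rangle$ is the claimed frustration-free zero-energy ground state.

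The only step needing care is the orbit bookkeeping in the middle: one must be sure that ${\bf t}\mapsto{\bf t}'$ is a genuinely fixed-point-free involution on the active set and that the partner never escapes the class $\mathcal C$. Both are immediate — the involution property because $(0,m)\ne(m,0)$ and $(0,0)\ne(m,-m)$ whenever $m\ne0$, and the stability of $\mathcal C$ straight from the definition of $\sim$ as the relation generated by precisely these local moves, together with the preceding lemma guaranteeing that distinct equivalence classes are disjoint (so there is no ambiguity about which class ${\bf t}'$ lands in). I do not expect any obstacle beyond this.
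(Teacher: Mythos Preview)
Your proof is correct and follows essentially the same idea as the paper's: both exploit that each rank-one projector $|a\rangle\langle a|$ with $|a\rangle\propto|u\rangle-|v\rangle$ annihilates a state precisely when its amplitudes on $|u\rangle$ and $|v\rangle$ agree, which the uniform superposition over an equivalence class trivially satisfies. The paper phrases this as an amplitude condition $\langle\psi|0m\rangle=\langle\psi|m0\rangle$, etc., and thereby also obtains the converse (that the uniform superpositions \emph{span} the ground space), while you carry out the explicit orbit-pairing computation for the forward direction only --- which is all the lemma actually asserts.
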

\begin{proof}
The Hamiltonian $H_{n}^{s}$ in Eq.~\eqref{eq:Hn_s-2} is a sum of local
projectors (Eq.~\eqref{eq:hs_Proj}). If a ground state $\psi$ vanishes
on each local projector, then for any $m\in[s]$ it must obey $\langle\psi|0m\rangle=\langle\psi|m0\rangle$,
$\langle\psi|0(-m)\rangle=\langle\psi|(-m)0\rangle$ and $\langle\psi|00\rangle=\langle\psi|m(-m)\rangle$.
It follows that $\psi$ has the same amplitude on a pair of equivalent
strings ${\bf s}\sim{\bf t}$, which means $\langle\psi|{\bf s}\rangle=\langle\psi|{\bf t}\rangle$.
It follows that the ground subspace of $H_{n}^{s}$ is frustration
free and is spanned by the pairwise orthogonal states
\begin{equation}
|c_{{\bf x_{k}}}\rangle\propto\sum_{{\bf s}\sim c_{{\bf x_{k}}}}|{\bf s}\rangle\label{eq:AllGroundStates}
\end{equation}
where to simplify the notation, we denoted ${\bf x_{k}}=(x_{1},\dots,x_{k})$.
Clearly each distinct ${\bf x_{k}}$ results in a distinct ground
state $|c_{{\bf x_{k}}}\rangle$. 
\end{proof}
The ground states can be highly entangled. However, among
the many ground states there is a substantial subset that are all
product states, i.e., $k=n$. We will use these to construct quantum
error correcting codes. Before doing so let us answer: How many product
state ground states are there?

Let $T_{n}$ be the set of all \textit{allowed} $2s$-ary strings
${\bf t}=t_{1}t_{2}\dots t_{n}$ of length $n$ defined by
\begin{equation}
T_{n}\equiv\left\{ {\bf t}\in\Sigma_{*}^{n}\:|\:\text{if}\;t_{j}=m,m\in[s],\text{ then }t_{j+1}\ne-m\right\} .\label{eq:Tn}
\end{equation}
Let $|T_{n}|$ be the size of this set. Since $T_{0}=\emptyset$ and
$T_{1}=\Sigma_{*}$, we have that $|T_{0}|=1$ and $|T_{1}|=2s$. 
\begin{lem}
$|T_{n+2}|=2s|T_{n+1}|-s|T_{n}|$ , with $|T_{0}|=1$ and $|T_{1}|=2s$.
We have
\begin{equation}
|T_{n}|=\frac{
s^{n}\left\{ \left(1+\bar s\right)^{n+1}-\left(1-\bar s\right)^{n+1}\right\} 
}{2\sqrt{1-1/s}}
,\label{eq:Tn_size}
\end{equation}
where $\bar s = \sqrt{1-1/s}$.
Asymptotically, for $n \gg 1$, it holds that
\begin{equation}
|T_{n}|\approx\frac{1+\sqrt{1-1/s}}{2\sqrt{1-1/s}}\left[s(1+\sqrt{1-1/s})\right]^{n}.
\label{eq:TnAsymp}
\end{equation}
\end{lem}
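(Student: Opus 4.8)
The plan is to establish the recurrence $|T_{n+2}| = 2s|T_{n+1}| - s|T_n|$ by a transfer-matrix or direct combinatorial argument, and then solve the linear recurrence in closed form. For the recurrence, I would partition the allowed strings in $T_{n+2}$ according to their first two letters $(t_1, t_2)$. If I simply drop the first letter $t_1$, the tail $t_2 \cdots t_{n+2}$ is an allowed string in $T_{n+1}$; conversely, given an allowed string $u_1 u_2 \cdots u_{n+1} \in T_{n+1}$, I can prepend any $t_1 \in \Sigma_*$ (there are $2s$ choices) and the only constraint is that we must not have $t_1 = m \in [s]$ with $u_1 = -m$. So the naive count $2s|T_{n+1}|$ overcounts exactly those pairs where a forbidden adjacency is created at the new boundary. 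For each allowed string in $T_{n+1}$ whose first letter is $u_1 = -m$ for some $m \in [s]$ — equivalently $u_1 \in \{-1,\dots,-s\}$ — there is exactly one bad choice of $t_1$, namely $t_1 = m$. The number of allowed strings in $T_{n+1}$ with a prescribed first letter is, by symmetry of the defining condition \eqref{eq:Tn} under the letter values, the same for every first letter, hence equal to $|T_{n+1}|/(2s)$; there are $s$ forbidden first-letter values to worry about, giving an overcount of $s \cdot |T_{n+1}|/(2s) = |T_{n+1}|/2$. That does not immediately match; the cleaner route is to observe that the number of $(t_1, u_1\cdots u_{n+1})$ with $t_1 = m$, $u_1 = -m$ and the rest allowed is the number of allowed strings of length $n+1$ beginning with $-m$, and I would instead relate this to $|T_n|$ directly: strings counted by the overcount are in bijection with allowed strings of length $n$ (delete both $t_1$ and the now-irrelevant $u_1$ after noting $u_1 = -m$ is forced, and $u_2 \cdots u_{n+1}$ ranges over all allowed strings of length $n$ because the constraint between $u_1$ and $u_2$ is automatically of the form "$u_1 = -m$, so $u_2 \ne m$", wait — this needs care). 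The cleanest correct bookkeeping: count pairs $(t_1,t_2,\dots)$ where $t_1 t_2$ is a forbidden adjacency, i.e., $t_1 = m$, $t_2 = -m$; then $t_2 t_3 \cdots t_{n+2}$ must itself be allowed, which forces nothing extra on $t_2 = -m$ except $t_3 \ne m$, so such strings are counted by (choice of $m \in [s]$) $\times$ (allowed strings of length $n+1$ starting with $-m$). I would then prove by the same symmetry that allowed strings of length $n+1$ starting with a fixed negative letter number $|T_n| - (\text{allowed length-}n\text{ strings starting with the corresponding positive letter})$, and iterate, but the quickest rigorous path is just to set up the $2s \times 2s$ transfer matrix $M$ with $M_{ab} = 1$ unless $a \in [s]$ and $b = -a$, note $|T_n| = \mathbf{1}^T M^{n-1} \mathbf{1}$, and compute that $M = J - N$ where $J$ is all-ones and $N$ is a permutation-like $0/1$ matrix with $s$ ones; the characteristic polynomial of $M$ then yields eigenvalues $s(1\pm\sqrt{1-1/s})$ plus zeros, from which both the recurrence and the closed form \eqref{eq:Tn_size} follow.

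Concretely, I would diagonalize (or block-triangularize) $M$ as follows: $N$ has the $s$ transpositions $m \leftrightarrow -m$ built in one direction, so it acts on the $2s$-dimensional space; working in the basis of symmetric/antisymmetric combinations $\frac{1}{\sqrt2}(e_m \pm e_{-m})$ for $m \in [s]$, $J$ is rank one and supported on the all-symmetric vector, and one checks $M$ restricted to the relevant $2$-dimensional invariant subspace spanned by $\sum_m e_m + \sum_m e_{-m}$ and $\sum_m e_m - \sum_m e_{-m}$ (or $\sum e_m$ and $\sum e_{-m}$) has matrix $\begin{pmatrix} s & s \\ s-1 & s \end{pmatrix}$ or similar, whose eigenvalues are $s \pm \sqrt{s(s-1)} = s(1 \pm \sqrt{1-1/s})$, i.e., $s \cdot r_\pm$ in the paper's notation; the remaining $2s-2$ eigenvalues are $0$. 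From the two nonzero eigenvalues $\lambda_\pm = s r_\pm$ the sequence satisfies $|T_{n+2}| = (\lambda_+ + \lambda_-)|T_{n+1}| - \lambda_+\lambda_- |T_n| = 2s|T_{n+1}| - s^2(1 - (1-1/s))|T_n| = 2s|T_{n+1}| - s|T_n|$, which is exactly the claimed recurrence. Then I solve it with the given initial conditions $|T_0| = 1$, $|T_1| = 2s$: writing $|T_n| = A\lambda_+^n + B\lambda_-^n$ and matching, one gets $A + B = 1$ and $A\lambda_+ + B\lambda_- = 2s$, which after substituting $\lambda_\pm = s(1\pm\sqrt{1-1/s})$ solves to the symmetric form $|T_n| = \frac{s^n}{2\sqrt{1-1/s}}\big[(1+\sqrt{1-1/s})^{n+1} - (1-\sqrt{1-1/s})^{n+1}\big]$ displayed in \eqref{eq:Tn_size}. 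The asymptotic \eqref{eq:TnAsymp} is then immediate since $0 < r_- < 1 \le r_+$ forces $\lambda_-^n = o(\lambda_+^n)$, leaving the dominant term with the stated prefactor $\frac{1+\sqrt{1-1/s}}{2\sqrt{1-1/s}}$.

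I expect the main obstacle to be the careful combinatorial justification of the recurrence if I go the direct-bijection route — the overcounting argument requires pinning down exactly which prepended letters create a forbidden adjacency and arguing, by the value-symmetry of the constraint in \eqref{eq:Tn}, that the relevant sub-counts collapse to $|T_n|$ rather than to some other quantity; it is easy to be off by a sign-dependent factor. For this reason I would favor the transfer-matrix approach, where the only real work is identifying the two-dimensional invariant subspace and computing a $2 \times 2$ characteristic polynomial, and the rest (verifying the recurrence, solving it, reading off the asymptotics) is routine linear algebra and the theory of linear recurrences with constant coefficients. A small side check I would include is verifying the base cases $|T_0| = 1$, $|T_1| = 2s$, and $|T_2| = 2s \cdot 2s - s = 4s^2 - s$ against a direct count of length-two allowed strings (all $4s^2$ pairs minus the $s$ forbidden pairs $(m,-m)$), which confirms the recurrence and the initial data are mutually consistent.
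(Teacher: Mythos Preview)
Your transfer-matrix approach is correct and is a genuinely different route from the paper. The paper does exactly the direct combinatorial argument you attempted and abandoned: it partitions ${\bf t}\in T_n$ by the \emph{sign} of the first letter. If $t_1=-m$ is negative there is no constraint on $t_2$, so the tail ranges freely over $T_{n-1}$, contributing $s|T_{n-1}|$. If $t_1=m$ is positive, the tail must lie in $T_{n-1}$ and not start with $-m$; but since $-m$ is negative, a tail starting with $-m$ has \emph{no} constraint between its first and second letters, so those excluded tails are exactly $(-m)$ followed by an arbitrary element of $T_{n-2}$. Hence this case contributes $s(|T_{n-1}|-|T_{n-2}|)$, and summing gives the recurrence. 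The key observation you were missing in your bijective attempt is this asymmetry: a negative first letter imposes nothing on the next letter, which is why the overcount collapses cleanly to $s|T_{n-2}|$ rather than to some more complicated quantity. (Your symmetry claim that ``the number of allowed strings with a prescribed first letter is the same for every first letter'' is false for exactly this reason, and you were right to drop it.)

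On your transfer-matrix argument: the identification of the two-dimensional $M$-invariant subspace spanned by $u=\sum_{m>0}e_m$ and $v=\sum_{m<0}e_{-m}$ is correct, and since $\mathbf{1}=u+v$ lies in it, the sequence $|T_n|=\mathbf{1}^T M^{n-1}\mathbf{1}$ is governed by the $2\times 2$ restriction, yielding the degree-two recurrence. However, your parenthetical claim that the remaining $2s-2$ eigenvalues of $M$ are zero is false: for $s=2$ one checks directly that $M$ has rank $3$, not $2$ (in general $M=J-N$ with $\mathrm{rank}(J)=1$ and $\mathrm{rank}(N)=s$, so $\mathrm{rank}(M)\le s+1$, and equality holds). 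This does not damage your argument, because the invariant-subspace reasoning already suffices, but you should not invoke the vanishing of the other eigenvalues. The paper's combinatorial proof is shorter and avoids any linear algebra; your approach has the advantage of making the roots $r_\pm$ appear as genuine eigenvalues rather than as outputs of the quadratic formula.
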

\begin{proof}
We prove this by induction, any ${\bf t}\in T_{n}$ is ${\bf t}=t_{1}\dots t_{n}$,
where $t_{j}\in\Sigma_{*}$. Since $t_{1}$ can be either a $m$ or
$-m$ for some $m\in[s]$, ${\bf t}$ is either ${\bf t}=(-m){\bf t}'$,
where the string ${\bf t}'\in T_{n-1}$ or ${\bf t}=m{\bf t}'$ where
${\bf t}'$ denotes the subset of strings in $T_{n-1}$ that do not
start with the letter $-m$, i.e., $t'_{1}\ne-m$. The number of strings
${\bf t}=(-m){\bf t}'$ with ${\bf t}'\in T_{n-1}$ is clearly $s|T_{n-1}|$.
Now the set of all string ${\bf t}=m{\bf t}'$ with $t'_{1}\ne-m$
coincides with the set that excludes the strings ${\bf t}=m(-m){\bf t}''$
where ${\bf t}''\in T_{n-2}$. Since $m$ takes on $s$ different
values, we have that the number of strings ${\bf t}=m{\bf t}'$ with
$t'_{1}\ne(-m)$ is $s\left(|T_{n-1}|-|T_{n-2}|\right)$. 

The total size of the set is then $|T_{n}|=2s|T_{n-1}|-s|T_{n-2}|$,
which is a linear recursion of second order with initial conditions
$|T_{0}|=1$ and $|T_{1}|=2s$. Shifting the indices to reflect $|T_{0}|$ and $|T_{2}|$
as the starting values, we have
\[
|T_{n+2}|=2s|T_{n+1}|-s|T_{n}|.
\]
The solution is elementary and of the form $|T_{n}|=Ar_{+}^{n}+Br_{-}^{n}$
, where the two roots $r_{\pm}$ are $r_{\pm}=s(1\pm\sqrt{1-1/s})$
and $A=\frac{1+\sqrt{1-1/s}}{2\sqrt{1-1/s}}$, $B=\frac{-1+\sqrt{1-1/s}}{2\sqrt{1-1/s}}$
are obtained from the initial conditions $|T_{0}|=1$ and $|T_{1}|=2s$.
This proves Eq.~\eqref{eq:Tn_size} and the observation $\smash{(1-\sqrt{1-1/s})<1}$
proves the asymptotic formula Eq.~\eqref{eq:TnAsymp}.
\end{proof}
In the limit we have  $\lim_{s\rightarrow1}|T_{n}|=n+1$, which is the number of distinct
product ground states $|(-)_{1}\dots(-)_{p}(+)_{p+1}\dots(+)_{n}\rangle$
where $p\in\{0,1,\dots,n\}$ with $p=0$ corresponding to $|++\dots+\rangle$.
\begin{cor}
The dimension of the kernel of $H_{n}^{s}$ is $\sum_{k=0}^{n}|T_{k}|=\frac{\left(-2+r_{+}^{n+1}+r_{-}^{n+1}\right)}{2(s-1)}$,
where $r_{\pm}\equiv s(1\pm\sqrt{1-1/s})$. Asymptotically we have
$\dim(\ker(H_{n}^{s}))\approx r_{+}^{n+1}/[2(s-1)]$.
\end{cor}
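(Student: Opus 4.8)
The plan is to turn the dimension count into a count of equivalence classes of spin strings and then evaluate that count by solving a short linear recursion; essentially everything rests on the lemmas already proved above.

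\emph{Reduction to a combinatorial count.} By Lemma~\ref{lem:GS_Hn} the ground space of $H_n^s$ is spanned by the states $|c_{\mathbf{x}_k}\rangle$ of Eq.~\eqref{eq:AllGroundStates}, one for each equivalence class of strings under the local moves, and these states are mutually orthogonal. A mutually orthogonal spanning set is a basis, so $\dim(\ker(H_n^s))$ equals the number of equivalence classes. By the preceding uniqueness lemma, the equivalence classes are in bijection with the irreducible representatives $c_{x_1,\dots,x_k}=x_1\cdots x_k\,\underbrace{0\cdots0}_{n-k}$ with $0\le k\le n$, $x_i\in\Sigma_*$, and no adjacent pair $(x_i,x_{i+1})=(m,-m)$ for $m\in[s]$. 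For a fixed core length $k$, this set is exactly $T_k$ of Eq.~\eqref{eq:Tn}: a zero-free string admits no $P^m$ move at all and admits a $Q^m$ move only where it contains an adjacent block $(m,-m)$, so ``no such adjacency'' is precisely terminality under the dynamics. Hence $\dim(\ker(H_n^s))=\sum_{k=0}^n|T_k|$, which is the first claimed identity up to evaluating the sum.

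\emph{Evaluating the sum.} Set $S_n:=\sum_{k=0}^n|T_k|$. Summing the recursion $|T_{k+2}|=2s|T_{k+1}|-s|T_k|$ over $k=0,\dots,n$ and reindexing yields the inhomogeneous recursion $S_{n+2}=2sS_{n+1}-sS_n+1$ with $S_0=1$ and $S_1=1+2s$. The characteristic roots of the homogeneous part are $r_\pm=s(1\pm\sqrt{1-1/s})$, which satisfy $r_++r_-=2s$ and $r_+r_-=s$; a constant particular solution is $-1/(s-1)$ (so we take $s>1$; the $s=1$ case is the limit, giving $\binom{n+2}{2}$ in line with the comment before the corollary). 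Writing $S_n=Cr_+^n+Dr_-^n-\tfrac{1}{s-1}$ and fitting the initial conditions --- using $r_++r_-=2s$ and $r_+^2+r_-^2=(r_++r_-)^2-2r_+r_-=4s^2-2s$ --- forces $C=r_+/[2(s-1)]$ and $D=r_-/[2(s-1)]$, hence
\[
S_n=\frac{r_+^{n+1}+r_-^{n+1}}{2(s-1)}-\frac{1}{s-1}=\frac{-2+r_+^{n+1}+r_-^{n+1}}{2(s-1)} .
\]
Alternatively one may substitute the closed form $|T_k|=Ar_+^k+Br_-^k$ from Eq.~\eqref{eq:Tn_size} into two geometric sums and simplify using $(r_+-1)(r_--1)=1-s$; I would keep whichever is shorter. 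For the asymptotics, rationalizing gives $r_-=1/(1+\sqrt{1-1/s})<1$ for every $s>1$, so $r_-^{n+1}\to0$ and $S_n=r_+^{n+1}/[2(s-1)]+O(1)$, i.e. $\dim(\ker(H_n^s))\approx r_+^{n+1}/[2(s-1)]$.

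Because this is bookkeeping layered on top of lemmas already in hand, there is no deep obstacle. The one step that needs care is identifying the irreducible cores of length $k$ with $T_k$ --- that ``no adjacent $(m,-m)$'' is both necessary and sufficient for a zero-free string to be terminal under the moves --- which is immediate once one observes that the move set degenerates on zero-free strings, but it is the only non-mechanical point. Getting the closed form to collapse exactly to $\frac{-2+r_+^{n+1}+r_-^{n+1}}{2(s-1)}$ also benefits from working with the symmetric functions $r_++r_-=2s$, $r_+r_-=s$ rather than the nested radicals directly.
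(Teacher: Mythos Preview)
Your proof is correct and follows the same line as the paper: both identify $\dim(\ker(H_n^s))$ with the number of equivalence classes, then with $\sum_{k=0}^n|T_k|$, and evaluate to the stated closed form. The paper's proof simply asserts the final equality without computation, whereas you supply the details by deriving and solving the recursion $S_{n+2}=2sS_{n+1}-sS_n+1$; your algebra checks out, and your observation that $r_-=1/(1+\sqrt{1-1/s})<1$ cleanly justifies the asymptotic.
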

\begin{proof}
The total number of equivalent classes is the dimension of the kernel.
For each $k\in[n]$ there are $|T_{k}|$ equivalent classes and we
have
\begin{align*}
\dim(\ker(H_{n}^{s})) & =\sum_{k=0}^{n}|T_{k}|=\frac{r_{+}^{n+1}+r_{-}^{n+1}-2}{2(s-1)}.
\end{align*}
\end{proof}
\begin{rem}
The fraction of product state ground states is a constant independent
of $n$
\begin{align*}
\frac{|T_{n}|}{\sum_{k=0}^{n}|T_{k}|}
&=\sqrt{\frac{s-1}{s}}\left[\frac{1-(r_{-}/r_{+})^{n+1}}{1+(r_{-}/r_{+})^{n+1}-2r_{+}^{-n-1}}\right] \\
&\approx\sqrt{\frac{s-1}{s}}\quad,\quad s>1\quad;
\end{align*}
whereas for $s=1$, $\lim_{s\rightarrow1}\dim(\ker(H_{n}^{s}))=\frac{1}{2}(n+1)(n+2)$
and the fraction vanishes with the system's size as $2/(n+2)\approx2/n$.

We now return to the ground space of $H_{n}$.
\end{rem}
\begin{lem}
Ground space of $H=H_{n}^{s}+H_{n}^{J}$ with $J>0$ coincides with
the span of the equivalent classes $|c_{{\bf x_{n}}}\rangle$, which
are all product states. The ground space dimension is $|T_{n}|$. 
\end{lem}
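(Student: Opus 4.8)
The plan is to reduce the claim to an intersection of kernels via the frustration-free sum lemma, and then read that intersection off from the explicit descriptions of $\ker(H_n^s)$ and $\ker(H_n^J)$ already obtained in the excerpt.

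\emph{Step 1 (frustration-freeness).} Both summands are positive: $H_n^s$ is a sum of projectors, and $H_n^J=J\sum_{k=1}^n(|0\rangle\langle0|)_k\ge 0$ because $J>0$; hence $H\ge 0$. To exhibit a zero-energy state, pick any ${\bf t}\in T_n$ (the set is nonempty since $|T_n|\ge 1$). Then $|{\bf t}\rangle\in\ker(H_n^J)$, as ${\bf t}$ contains no letter $0$, and $|{\bf t}\rangle\in\ker(H_n^s)$, since it is one of the product-state ground states of $H_n^s$ from Lemma~\ref{lem:GS_Hn} (the case $k=n$). Thus $H|{\bf t}\rangle=0$, the ground-state energy of $H$ is $0$, and Lemma~\ref{lem:FF_Sum_Lemma} yields that the ground space of $H$ equals $\ker(H_n^s)\cap\ker(H_n^J)$.

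\emph{Step 2 (the intersection).} By Lemma~\ref{lem:GS_Hn}, $\ker(H_n^s)$ has the pairwise-orthogonal basis $\{|c_{{\bf x_k}}\rangle\}$ ranging over canonical strings $c_{x_1,\dots,x_k}=x_1\dots x_k0\cdots0$ with $0\le k\le n$, and distinct $|c_{{\bf x_k}}\rangle$ have disjoint computational-basis supports; by the description recorded just below \eqref{eq:SpecH_J}, $\ker(H_n^J)=\mathrm{span}\{|{\bf t}\rangle:{\bf t}\in\Sigma_*^n\}$. Write $|\psi\rangle=\sum_{k,{\bf x_k}}\alpha_{{\bf x_k}}|c_{{\bf x_k}}\rangle$ and suppose $|\psi\rangle\in\ker(H_n^J)$. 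If $\alpha_{{\bf x_k}}\ne 0$ for some $k<n$, then by disjointness of supports $|\psi\rangle$ has nonzero amplitude on the computational basis vector $|x_1\dots x_k0\cdots0\rangle$, which contains a $0$, contradicting $|\psi\rangle\in\ker(H_n^J)$. Hence $\alpha_{{\bf x_k}}=0$ for all $k<n$, so $|\psi\rangle\in\mathrm{span}\{|c_{{\bf x_n}}\rangle\}$. Conversely, a canonical string with $k=n$ is precisely an element of $T_n$ (zero-free, with no substring $m(-m)$); since each local move requires either a letter $0$ or an adjacent pair $m(-m)$, no move applies, so its equivalence class is a singleton and $|c_{{\bf x_n}}\rangle=|x_1\dots x_n\rangle$, a zero-free product state lying in both kernels. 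Therefore $\ker(H_n^s)\cap\ker(H_n^J)=\mathrm{span}\{|c_{{\bf x_n}}\rangle:{\bf x_n}\in T_n\}$, which is spanned by product states.

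\emph{Step 3 (dimension).} The vectors $|c_{{\bf x_n}}\rangle$ are orthonormal with disjoint supports and number $|T_n|$, so the ground space of $H$ has dimension $|T_n|$. I expect the only subtle point to be the no-cancellation argument in Step 2: one must use the disjointness of the supports of the $|c_{{\bf x_k}}\rangle$ — equivalently, that canonical representatives carry the maximal number of trailing zeros — to be certain that a nonzero coefficient on any $c_{{\bf x_k}}$ with $k<n$ genuinely forces support on a string containing a $0$. The remaining steps are routine bookkeeping.
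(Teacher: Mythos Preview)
Your proof is correct and follows the same strategy as the paper: invoke Lemma~\ref{lem:FF_Sum_Lemma} to reduce to $\ker(H_n^s)\cap\ker(H_n^J)$, then identify this intersection as $\mathrm{span}\{|c_{{\bf x_n}}\rangle\}$ using the known descriptions of both kernels. You are in fact more careful than the paper's own proof, which simply asserts the intersection without the disjoint-supports argument of your Step~2 and does not explicitly verify the zero-energy hypothesis needed for Lemma~\ref{lem:FF_Sum_Lemma}.
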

\begin{proof}
The set of FF ground states of $H_{n}^{s}$ is given by Eq.~\eqref{eq:AllGroundStates}
in Lemma \ref{lem:GS_Hn}. The kernel of $J\sum_{k=1}^{n}(|0\rangle\langle0|)_{k}$
is the span of all product states of weight $n$, i.e.,
states $|{\bf t}\rangle$ where ${\bf t}\in\Sigma_{*}^{n}$. By Lemma
\ref{lem:FF_Sum_Lemma} the intersection of the two is $|c_{{\bf x_{n}}}\rangle$,
which we recall are the product states of the irreducible strings
of weight $n$ and there are $|T_{n}|$ of them (Eqs.~\eqref{eq:Tn_size}
and \eqref{eq:TnAsymp}).
\end{proof}

\subsection{Constructing good quantum codes in the ground space}

In this section we construct quantum codes that are supported on a selected subset of computational basis states that lie within the kernel of our 2-local Hamiltonian.  
We show that these quantum codes that encode a single logical qubit can have a linear distance.

Recall that the standard local spin states are $|j\rangle$, with
$j\in\Sigma=  \{-s, \dots, +s\}$. 
We also define the non-zero alphabet $\Sigma_{*}\equiv\{-s,\dots,-1,+1,\dots,s\}$.
Clearly $|\Sigma|=2s+1$ and $|\Sigma_{*}|=2s$.
Define the generalized (non-Hermitian) Pauli operators
in terms of these basis states as
\[
X=\sum_{j\in\Sigma_{c}}|j\rangle\langle j+1|\;,\qquad Z=\sum_{j\in\Sigma_{c}}\omega^{j}|j\rangle\langle j|\quad,
\]
where $\omega=\exp(2\pi i/(2s+1))$ is a root of unity,
and by $\Sigma_{c}$, we mean the set $\Sigma$ with the cyclic property that $s+1=-s$
and $-s-1=+s$. 
We denote by $X_{k}$ and $Z_{k}$ the
generalized Pauli operators that act on the $k^{\text{th}}$ spin
(qudit) and act trivially on the rest.

Recall that $\ker(H_n)$ is spanned by certain product
states of weight $n$. We denote the basis of $\ker(H_n)$ by $\mathcal{T}_n$ where 
\[
\mathcal{T}_{n}=\{|{\bf t}\rangle\;:\;{\bf t}\in T_{n}\}
\]
and $T_{n}$ is defined in Eq.~\eqref{eq:Tn}.

The quantum code that we construct will be a two-dimensional subspace of $\mathcal{T}_{n}$. In general, the logical codewords can be supported on an exponential number of basis states in $\mathcal{T}_{n}$. 
However, we select a subset of computational
basis states labels ${C}\subset T_{n}$ such that the minimum Hamming
distance of ${C}$ is at least $2t+1$, where $t$ is the designed
maximum number of correctable errors. 
The set of labels ${C}$ has the interpretation as a classical code, and we denote its distance by $\text{dist}({C})$:
\[
\text{dist}({C})\equiv\min\left\{ \text{dist}({\bf t},{\bf t}')\:|\:{\bf t},{\bf t}'\in T_{n}\:,\;{\bf t}\ne{\bf t}'\right\} \;,
\]
and
\[
\text{dist}({\bf t},{\bf t}')=\left|\left\{ t'_{i}\ne t_{i}\;:\;i\in[n]\right\} \right|
\]
is the usual Hamming distance between codewords ${\bf t},{\bf t}'\in\Sigma_{*}^{n}$.
The logical codewords of our quantum code will be supported only on the set of basis sets labeled by the classical code ${C}$.

Since the basis are product states, finding the subset ${C}$
with the desired distance can be seen as a problem in classical coding
theory.  
For example, when $s=2$ one can map the
elements of $T_{n}$ to $\mathbb{F}_{4}^{n}$ and use the properties of
quartenary codes over $\mathbb{F}_{4}^{n}$ to derive a classical
code with the prescribed distance. 
For instance, one can apply the mapping
\begin{equation}\label{eq:varphi}
\varphi(1)=0\;,\quad\varphi(-1)=1\;,\quad\varphi(2)=a\;,\quad\varphi(-2)=b\;,
\end{equation}
where  $\mathbb F_4 = \{0,1,a,b\}$,  $b=a+1$ and $a^{3}-1=0$, and 
for every ${\bf t}=t_1\dots t_n \in T_n$, define $\varphi(t_1\dots t_n) = (\varphi(t_1),\dots, \varphi(t_n) )$.

 Our strategy is to construct a code
$C$ over $\mathbb{F}_{4}^{n}$ that has a guaranteed minimum distance
and delete all codewords in it that have the forbidden substrings (01) and $(ab)$ to obtain a code $C'$. 
Then we let ${C} = \varphi^{-1}(C')$, 
 which will be the strings that define the computational (product) basis states.
 
Here, we prove that there are quantum codes within $\ker(H_n)$ that have linear distance in $n$. 
We leverage on the existence of good binary codes.
The relative distance of a code is the ratio of its distance to its length.
Good binary codes are defined as binary codes with positive relative distance.
To use the results from binary codes, we define a map $\beta$ from the binary symbols $0$ and $1$ to $2$ and $1$ respectively. 
It is then easy to see that given any binary code $C$, $\beta(C)$ is guaranteed to be a feasible subset of $T_n$, and hence we may use $\beta(C)$ to construct our quantum code.
 
Our main theorem is:
\begin{thm}\label{thm:linear-distance}
Let $0<\tau\le 1/4$ be a real and positive constant.
There exist quantum codes in $\ker(H_n)$ that encode one logical qubit and have the distance of $2 \tau n$ whenever 
\begin{align}
\ent_2(2\tau) + \ent_{2s+1}(2\tau) \log_2 (2s+1) + o(1) \le 1. \label{eq:gv-construct}
\end{align}
Second, there are explicit quantum codes which encode one logical qubit with a distance of $2 \tau n$ whenever
\begin{align}
1/2-\tau/0.11  \ge \log_2(2s+1) \ent_{2s+1}(2\tau).\label{eq:j-construct}
\end{align}
\end{thm}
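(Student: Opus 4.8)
The plan is to reduce both statements to producing a suitable \emph{classical} binary code and then feeding it into the Part~1 machinery, which already turns such a code into a one-qubit quantum code with prescribed bit-/phase-flip distances. Concretely I would set $d_X=d_Z=2\tau n$, so the resulting quantum code has overall distance $\min(d_X,d_Z)=2\tau n$, and note that by Lemma~\ref{lem:main-result} it suffices to exhibit a label set $C\subseteq T_n$ with $\mathrm{dist}(C)\ge 2\tau n$ and $|C|\ge 2V_{2s+1}(2\tau n-1)$; here $T_n$ from Eq.~\eqref{eq:Tn} indexes the product-state basis of $\ker(H_n)$, so any quantum code whose logical words are supported on computational basis states labelled by $C$ automatically lies in $\ker(H_n)$. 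Thus the whole theorem comes down to finding a large enough, good enough subset of $T_n$.

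The bridge to classical coding is the embedding $\beta$ introduced above, which sends the binary alphabet into $\{1,2\}\subseteq\Sigma_*$ (this needs $s\ge 2$, the regime of interest). Since neither $-1$ nor $-2$ lies in $\{1,2\}$, every string over $\{1,2\}$ automatically avoids the forbidden substrings $m(-m)$, so $\beta(C_{\mathrm{bin}})\subseteq T_n$; moreover $\beta$ is injective on symbols, hence $|\beta(C_{\mathrm{bin}})|=|C_{\mathrm{bin}}|$ and $\mathrm{dist}(\beta(C_{\mathrm{bin}}))=\mathrm{dist}(C_{\mathrm{bin}})$. Using the standard volume bound $V_q(\rho n)\le q^{\,n\,\ent_q(\rho)}$ with $q=2s+1$ and $\rho=2\tau$, the requirement $|C|\ge 2V_{2s+1}(2\tau n-1)$ is implied by $\frac1n\log_2|C_{\mathrm{bin}}|\ge \ent_{2s+1}(2\tau)\log_2(2s+1)+o(1)$, which is the single inequality I then need to satisfy with an appropriate choice of $C_{\mathrm{bin}}$.

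For the first claim I would take $C_{\mathrm{bin}}$ from the Gilbert--Varshamov bound: a binary code of length $n$ with $\mathrm{dist}\ge 2\tau n$ and $|C_{\mathrm{bin}}|\ge 2^n/V_2(2\tau n-1)\ge 2^{\,n(1-\ent_2(2\tau)-o(1))}$; substituting $\frac1n\log_2|C_{\mathrm{bin}}|=1-\ent_2(2\tau)-o(1)$ into the inequality above yields exactly Eq.~\eqref{eq:gv-construct}. For the second claim I would replace it by an explicit Justesen code (\cite[Chpt.~10, Thm.~11]{sloane}): for any $R<1/2$ there is an explicit binary code of length $n$, size $2^{Rn}$, and relative distance at least $H_2^{-1}(1/2)(1-2R)$ with $H_2^{-1}(1/2)\approx 0.11$; choosing $R=1/2-\tau/0.11$ forces relative distance $\ge 2\tau$, and the size condition becomes Eq.~\eqref{eq:j-construct}, with explicitness inherited from the Justesen construction and from Alg.~\ref{alg:quBit}. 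In both cases one finishes by applying Lemma~\ref{lem:main-result} to $C=\beta(C_{\mathrm{bin}})$ with $d_X=d_Z=2\tau n$.

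There is no new conceptual obstacle here — the substance lives in Part~1 (Alg.~\ref{alg:quBit}, Lemma~\ref{lem:main-result}) and in the already-established product-state characterization of $\ker(H_n)$. The only care needed is the asymptotic bookkeeping: handling the additive $+1$ coming from the factor $2$ in $2V_{2s+1}(\cdot)$, the ceilings in $2\tau n$ and $Rn$, and the $o(1)$ terms in the GV and volume estimates, and — for the second part — fixing the precise constant in the Justesen relative-distance bound so that it really is the stated $0.11$. I would also state explicitly the standing assumption $s\ge 2$, which is what makes the two-letter embedding $\beta$ land inside $T_n$ and keeps $\ker(H_n)$ exponentially large.
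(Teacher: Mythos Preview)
Your proposal is correct and follows essentially the same route as the paper: embed a good binary code into $T_n$ via $\beta$, then invoke Lemma~\ref{lem:main-result} with $d_X=d_Z=2\tau n$, using GV for the existential bound and Justesen for the explicit one. Your write-up is in fact slightly more careful than the paper's (you flag the standing assumption $s\ge 2$, the distance- and size-preservation of $\beta$, and the asymptotic bookkeeping), but the argument is the same.
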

\begin{rem}
We call the constructions given by optimizing \eqref{eq:gv-construct} and \eqref{eq:j-construct} as the Gilbert-Varshamov (GV)~\cite[Chpt.~1]{sloane} and Justesen construct~\cite[Chpt.~10,~Thm.~11]{sloane} respectively.
The GV construct arises from choosing a random $C$, while the Justesen construct uses the classical Justesen code to define $C$.
\end{rem}
\begin{proof}
To prove the first result, we use random coding arguments. Namely, we turn our attention to random binary codes.
By the Gilbert-Varshamov bound~\cite[Chapter 1]{sloane}, we know that such binary codes are almost surely good binary codes.
Moreover we know that for any positive integer $t$,
there exists a classical binary code $C$ that corrects $2t+1$ errors where
\begin{align}
|C| \ge  2^n / V_2(2t).
\end{align}
Using Lemma \ref{lem:main-result}, by setting ${C} = \beta(C)$, this implies that if
\begin{align}
 2^n / V_2(2t) \ge 2V_{2s+1}(2t), \label{GV-result}
\end{align}
then there exists some quantum code encoding a single qubit in $\ker(H_n)$ that also corrects $t = \tau n $ errors.
Since the inequality~\eqref{GV-result} is equivalent to the~\eqref{eq:gv-construct}, the first result of our theorem follows.

The second result follows from using Justesen's concatenated construction, which gives binary codes of with asymptotically linear distance and positive rate~\cite{sloane}[Chapter 10, Theorem 11].  
More specifically, a binary Justesen code $C_{\rm Justesen}$ is a concatenated code, with a Reed-Solomon outer code on the finite field of dimension $2^m$ as the outer code, and distinct inner codes each encoding $m$ bits into $2m$ bits.
When $C_{\rm Justesen}$ has a length of $n$, the length of the Reed-Solomon outer code is $n/(2m)$. Each inner code is a binary code and has rate $1$, and is a distinct mapping from $\mathbb F_{2^m}$ to $\mathbb F_2^{2m}$.
The relative distance $\delta = d/n$ of this family of Justesen codes is given by 
\begin{align}
    \delta \ge 0.110(1-2\log_2|C_{\rm Justesen}|/n) + o(1).
\end{align}
Rearranging this inequality and using $\tau = \delta/2$, we get 
\begin{align}
\log_2 |C_{\rm Justesen}|     \ge  n( 1/2 -  \tau/ 0.11 + o(1)) .
\label{eq:just}
\end{align}
Hence the number of codewords of a Justesen code with distance $2t+1$ is asymptotically at least $2^{n(1/2-9.1\tau + o(1))}$.
Now we set ${C} = \beta(C_{\rm Justesen})$,
and use Lemma \ref{lem:main-result} to find that a quantum code in the ground space of $\ker(H_n)$ that corrects asymptotically $\tau n$ errors exists whenever $\tau < (2s)/(2s+1)$ and the inequality \eqref{eq:j-construct} holds. 
\end{proof}
Using \eqref{eq:j-construct} and \eqref{eq:just},
we plot the attainable values of $\tau$ for different values of spins in Fig. \ref{Fig:GV-demonstration}.

In the next two sections we illustrate explicit quantum codes on 8 and 6 qudits respectively. These quantum code were obtained from punctured variants of the classical ternary Golay code, where by punctured we mean that the first three symbols of the code were ignored for the 8 qudit code, and five symbols were ignored for the 6 qudit code.

\begin{figure}
\centering
\includegraphics[scale=0.34]{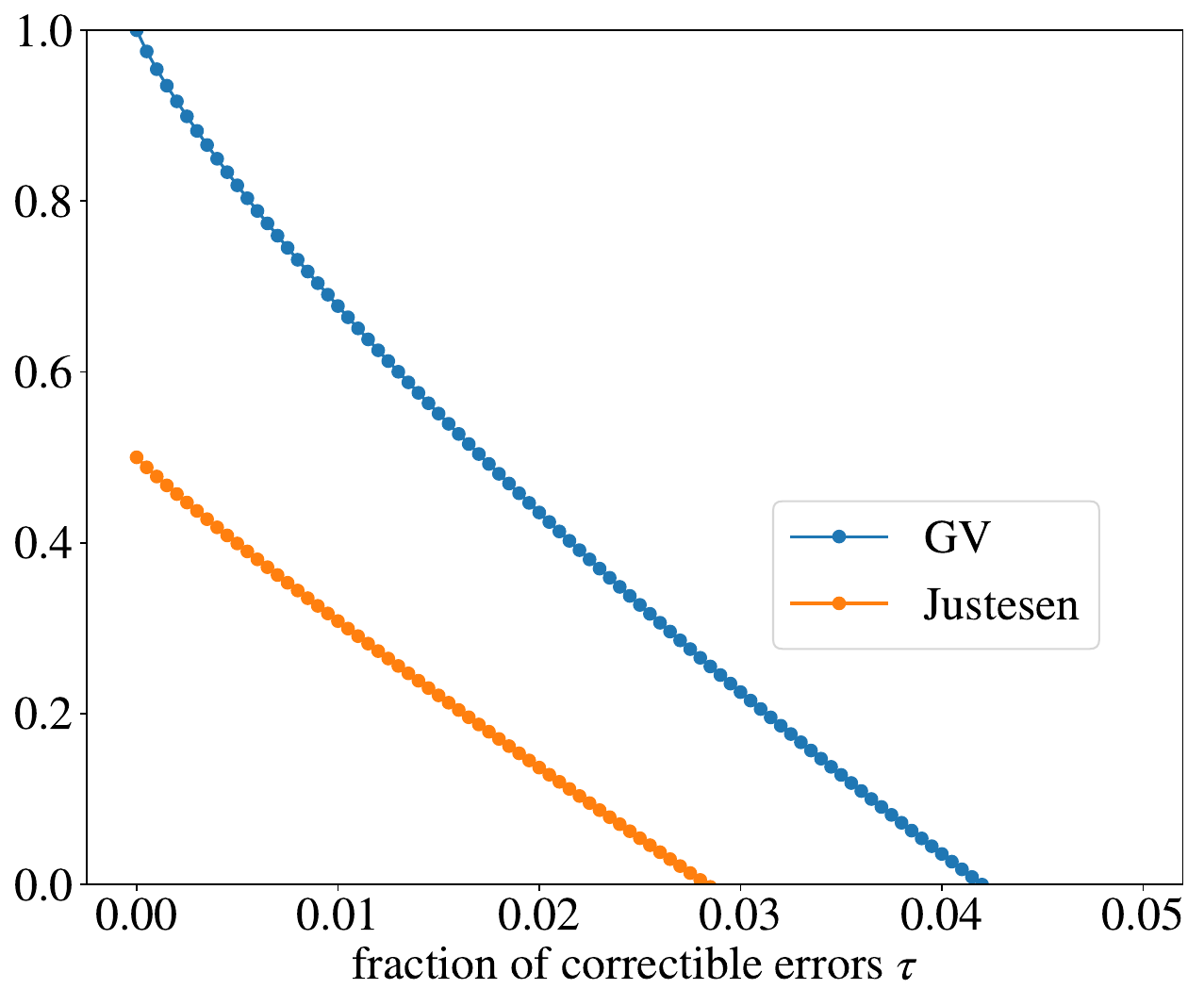} 
\caption{
The vertical axis is equal to $\dim(\ker(\log_n(|A|)))$, 
which gives a lower bound on the size of the kernel of the $A$ matrix in log-scale. The horizontal axis is $\tau = t/n$, where $t$ denotes the number of correctable errors for our quantum code in the ground space of $\ker(H_n)$. 
The length $n$ is taken to be asymptotically large. 
This demonstrates that there are linear distance quantum codes 
in the ground space of the frustration-free 2-local Hamiltonian $H_n$.}
\label{Fig:GV-demonstration}
\end{figure}

We furthermore use \eqref{eq:j-construct} and \eqref{eq:just} 
to plot the fraction of correctible errors attainable for spin chains of different spin if we use QEC codes constructed using either random classical codes or Justesen codes.

\begin{figure}
\centering
\includegraphics[scale=0.34]{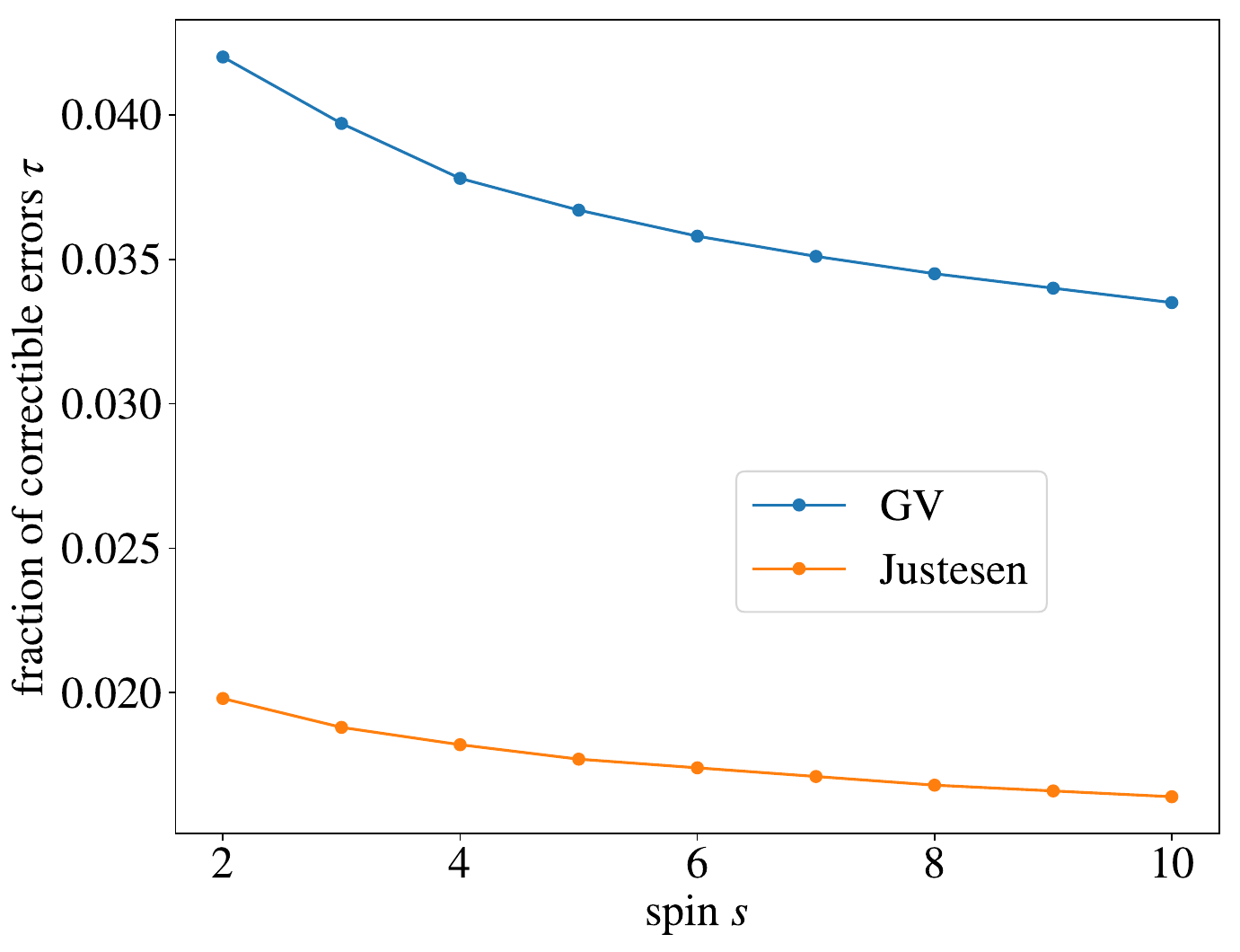} 
\caption{
The vertical axis is $\tau = t/n$, where $t$ denotes the number of correctable errors for our quantum code in the ground space of $\ker(H_n)$. The horizontal axis is the spin number our spin-chain.}
\label{Fig:spin-thresholds}
\end{figure}

\subsubsection{A ground subspace Steane code that corrects a single error
}
By slightly modifying the 7-qubit Steane code, we can embed it in the ground space of $H_7$.  Namely, on the set of computational basis vectors, we can apply the map 
$|0\> \to |1\>$ and
$|1\> \to |2\>$.
This also corresponds to the quantum code with logical codewords
\begin{align}
    |0_L\>
    =&
    \frac{1}{\sqrt{8}}\left(
    |1,1,1,1,1,1,1\>
+|1,1,1,2,2,2,2\> 
\right.\notag\\&\left.
+|1,2,2,1,2,2,1\>
+|1,2,2,2,1,1,2\>\right.\notag\\
&\left.+|2,1,2,1,2,1,2\>
+|2,1,2,2,1,2,1\>
\right.\notag\\&\left.
+|2,2,1,1,1,2,2\>
+|2,2,1,2,2,1,1\>\right)
\end{align}
and
\begin{align}
|1_L\> = &
\frac{1}{\sqrt{8}}
\left(|1,1,2,1,1,2,2\>
+|1,1,2,2,2,1,1\>
\right.\notag\\&\left.
+|1,2,1,1,2,1,2\>
+|1,2,1,2,1,2,1\>\right.\notag\\
&\left.+|2,1,1,1,2,2,1\>
+|2,1,1,2,1,1,2\>
\right.\notag\\&\left.
+|2,2,2,1,1,1,1\>
+|2,2,2,2,2,2,2\>\right).
\end{align} 
Using the KL criteria, we can verify that this quantum code corrects any single error.

\subsubsection{A ground subspace code on eight spins that corrects a single error
}
Here, we give an example of a quantum code that lies in the kernel of $H^s_n$ for $n=8$ and $s=2$. This quantum code encodes a single logical qubit, corrects an arbitrary one-qubit
error, and has logical codewords
\begin{align}
|0_L\>=& (|\phi_0\>|\theta_0\>+|\phi_1\>|\theta_1\>+|\phi_2\>|\theta_2\>
\notag\\&
+|\phi_3\>|\theta_3\>
+|\phi_4\>|\theta_4\>+|\phi_5\>|\theta_5\>)/\sqrt 6\notag\\
|1_L\>=& (|\phi_1\>|\theta_4\>+|\phi_0\>|\theta_3\>
+|\phi_3\>|\theta_0\>
\notag\\&
+|\phi_2\>|\theta_5\>+|\phi_5\>|\theta_2\>+|\phi_4\>|\theta_1\>)/\sqrt 6,
\label{eq:explicit-code}
\end{align}
where
\begin{align}
|\phi_0\> &= |1,1,1,{-}2\>,\notag\\
|\phi_1\> &= |1,{-}2,{-}1,{-}1\>,\notag\\
|\phi_2\> &= |{-}1,{-}2,{-}2,{-}1\>,\notag\\
|\phi_3\> &= |{-}1,{-}1,1,1\>,\notag\\
|\phi_4\> &= |2,{-}1,{-}1,1\>,\notag\\
|\phi_5\> &= |2,1,{-}2,{-}2\>,
\end{align}
and
\begin{align}
|\theta_0\> &= |{-}2,2,2,1\>,\notag\\
|\theta_1\> &= |1,{-}2,{-}2,{-}2\>,\notag\\
|\theta_2\> &= |{-}1,2,2,1\>,\notag\\
|\theta_3\>&= |{-}2,{-}2,{-}2,{-}2\>,\notag\\
|\theta_4\> &= |1,2,2,1\>,\notag\\
|\theta_5\> &= |{-}1,{-}2,{-}2,{-}2\>.
\end{align}
The KL criteria for correcting a single error using this quantum code are satisfied. Also this code is not a CWS code.
To see this, note that for any stabilizer state made using qudits of dimension 5 (a prime number), the number of computational basis states over which they are superposition over must be a power of 5. However the logical codewords the 8-qudit code are superpositions over 6 computational basis states, and 6 is not a power of 5. Hence there does not exist any Pauli that takes either of logical codewords to a stabilizer state.
Hence, the 8-qudit code is not a CWS code.
Therefore, we have an example of a quantum code that falls outside of the CWS, stabilizer and CSS quantum coding formalisms.

Next we point out that the quantum code in \eqref{eq:explicit-code} has a concatenated structure. The logical codewords of the outer code, given in \eqref{eq:explicit-code} are simply maximally entangled states on two six-level systems. 

From the structure of the inner codes, it is clear that to perform a logical bit-flip on our quantum code, it suffices to induce the transition $|{-}2,{-}2,{-}2\> \leftrightarrow |2,2,1\>$ on the last three spins.
Performing other logical computation operations is significantly more complicated and we leave this for future work.
 
\subsubsection{A ground subspace code that detects a single error 
}

We also construct an error detecting quantum code (distance equal to 2) on six spins with $s=2$ using the logical operators 
\begin{align}
|0_L\> &= \frac{|1,1,2,1,-2,1\> \; +\; |-2,1,-2,-2,2,2\>}{\sqrt 2}\\
|1_L\> &= \frac{|1,1,-2,-2,2,1\>\; +\;  |-2,1,2,1,-2,2\>}{\sqrt 2}.
\end{align}
Such a construction is not unique, and we have many other error detecting codes on six spins.
 
\subsubsection*{Acknowledgement}
\begin{acknowledgement}RM acknowledges funding
from the MIT-IBM Watson AI Lab under the project
{\it Machine Learning in Hilbert space}. The research was
supported by the IBM Research Frontiers Institute. YO acknowledges support from the EPSRC (Grant No. EP/M024261/1) and the QCDA project (Grant No. EP/R043825/1) which has received funding from the QuantERA ERANET Cofund
in Quantum Technologies implemented within the European Union’s Horizon 2020 Programme. YO also acknowledges support from EPSRC (Grant No. EP/W028115/1). 
\end{acknowledgement}

\bibliographystyle{quantum}
\bibliography{motzkinqecc}

\end{document}